\newtheorem{theorem}{Theorem}
\newtheorem{proposition}{Proposition}
\newtheorem{lemma}{Lemma}
\newenvironment{myproof}[1][\proofname]{\par
    \normalfont \topsep6\p@\@plus6\p@\relax
    \trivlist
    \item[\hskip\labelsep
    \itshape
    #1\@addpunct{.}]\ignorespaces
}{%
\endtrivlist\@endpefalse
}
\newcounter{relctr} 
\everydisplay\expandafter{\the\everydisplay\setcounter{relctr}{0}} 
\DeclareMathOperator*{\minimize}{minimize} 
\DeclareMathOperator*{\maximize}{maximize} 
\newcommand{\norm}[1]{\left\lVert#1\right\rVert}
\newtheoremstyle{remarkstyle}%
  {}
  {}
  {\itshape}
  {}
  {\itshape}
  {.}
  {.5em}
  {}
\theoremstyle{remarkstyle}
\newtheorem{remark}{Remark}
\newcommand\labelrel[2]{%
  \begingroup
    \refstepcounter{relctr}%
    \stackrel{\textnormal{(\alph{relctr})}}{\mathstrut{#1}}%
    \originallabel{#2}%
  \endgroup
}
\begin{document}

\title{Beamforming and Waveform Optimization for RF Wireless Power Transfer with Beyond Diagonal Reconfigurable Intelligent Surfaces}

\author{Amirhossein~Azarbahram,~\IEEEmembership{Graduate~Student~Member,~IEEE,}
        Onel~L.~A.~López,~\IEEEmembership{Senior~Member,~IEEE,}
        Bruno~Clerckx,~\IEEEmembership{Fellow,~IEEE,}
       Marco~Di~Renzo,~\IEEEmembership{Fellow,~IEEE,}
        and~Matti~Latva-Aho,~\IEEEmembership{Fellow,~IEEE}
\thanks{This work is partially supported by the Research Council of Finland (Grants 348515, 362782, and 346208 (6G Flagship)); by the European Commission through the Horizon Europe/JU SNS project Hexa-X-II (Grant no. 101095759); by the Finnish-American Research \& Innovation Accelerator. The work of M. Di Renzo was supported in part by the Nokia Foundation, the French Institute of Finland, and the French Embassy in Finland under the France-Nokia Chair of Excellence in ICT. The work of M. Di Renzo was also supported in part by the European Commission through the Horizon Europe Project titled COVER under Grant 101086228, the Horizon Europe Project titled UNITE under Grant 101129618, the Horizon Europe Project titled
INSTINCT under Grant 101139161, and the Horizon Europe Project titled
TWIN6G under Grant 101182794; in part by the Agence Nationale de
la Recherche (ANR) through the France 2030 Project titled ANR-PEPR
Networks of the Future under Grant NF-PERSEUS 22-PEFT-004; in part by
the CHIST-ERA Project titled PASSIONATE under Grant CHIST-ERA-22-
WAI-04 and Grant ANR-23-CHR4-0003-01; and in part by the Engineering
and Physical Sciences Research Council (EPSRC), part of U.K. Research and Innovation, and the U.K. Department of Science, Innovation and Technology through the CHEDDAR Telecom Hub under Grant EP/X040518/1 and Grant EP/Y037421/1, and through the HASC Telecom Hub under Grant EP/X040569/1. The work of B.~Clerckx is partially supported by UKRI grant EP/Y004086/1, EP/X040569/1, EP/Y037197/1, EP/X04047X/1, and EP/Y037243/1.}
\thanks{A.~Azarbahram, O.~L\'opez, and M.~Latva-Aho are with Centre for Wireless Communications (CWC), University of Oulu, Finland, (e-mail: \{amirhossein.azarbahram, onel.alcarazlopez, matti.latva-aho\}@oulu.fi). B.~Clerckx is with the Department of Electrical and
Electronic Engineering, Imperial College London, SW7 2AZ London, U.K, (e-mail:b.clerckx@imperial.ac.uk). M. Di Renzo is with Universit\'e Paris-Saclay, CNRS, CentraleSup\'elec, Laboratoire des Signaux et Syst\`emes, 3 Rue Joliot-Curie, 91192 Gif-sur-Yvette, France. (marco.di-renzo@universite-paris-saclay.fr), and with King's College London, Centre for Telecommunications Research -- Department of Engineering, WC2R 2LS London, U.K (marco.di\_renzo@kcl.ac.uk).

}%
}

\maketitle

\begin{abstract}

Radio frequency (RF) wireless power transfer (WPT) is a promising technology to seamlessly charge low-power devices, but its low end-to-end power transfer efficiency remains a critical challenge. To address the latter, low-cost transmit/radiating architectures, e.g., based on reconfigurable intelligent surfaces (RISs), have shown great potential. Beyond diagonal (BD) RIS is a novel branch of RIS offering enhanced performance over traditional diagonal RIS (D-RIS) in wireless communications, but its potential gains in RF-WPT remain unexplored. Motivated by this, we analyze a BD-RIS-assisted single-antenna RF-WPT system to charge a single rectifier, and formulate a joint beamforming and multi-carrier waveform optimization problem aiming to maximize the harvested power. We propose two solutions relying on semi-definite programming for fully connected BD-RIS, a successive convex approximation (SCA)-based beamforming approach, and an efficient low-complexity iterative method relying on SCA. Numerical results show that the proposed algorithms converge and that adding transmit sub-carriers or RIS elements improves the harvesting performance. We show that the transmit power budget impacts the relative power allocation among different sub-carriers depending on the rectifier's operating regime, while BD-RIS shapes the cascade channel differently for frequency-selective and flat scenarios. Finally, we verify by simulation that BD-RIS and D-RIS achieve the same performance under pure far-field line-of-sight conditions (in the absence of mutual coupling). Meanwhile, BD-RIS outperforms D-RIS as the non-line-of-sight components of the channel become dominant.

\end{abstract}

\begin{IEEEkeywords}
RF wireless power transfer, reconfigurable intelligent surfaces, waveform optimization, passive beamforming.
\end{IEEEkeywords}

\IEEEpeerreviewmaketitle


\section{Introduction}

\IEEEPARstart{F}{uture} wireless communication systems must ensure seamless green connectivity among numerous low-power devices. For this, it is essential to mitigate electronic waste resulting from battery replacements and reduce disruptions caused by battery depletion \cite{intro1, intro2, lópez2023highpower}. Energy harvesting (EH) technologies are fundamental enablers for this by providing wireless charging capability and promoting sustainable Internet of Things \cite{3gppamb, ZEDHEXA}.  EH devices may harvest energy from readily available or dedicated sources. However, the former, a.k.a., ambient EH, may not be possible in all environments and/or may call for large device form factors \cite{intro3}, while the latter is supported by wireless power transfer (WPT) technologies such as inductive coupling, laser power beaming, magnetic resonance coupling, and radio frequency (RF) radiation. Hereinafter, we focus on RF-WPT, which can provide multi-user wireless charging capability over large distances while using the wireless communications infrastructure.


\subsection{Preliminaries}

A WPT system comprises three key building blocks: i) energy transmitter (ET); ii) wireless channel; and iii) energy receiver (ER). The charging signal is generated and amplified using a direct current (DC) power source at the ET. Then, it is upconverted to RF and transmitted over the wireless channel. The ER collects the signal and converts it to DC for EH. Each block includes some power consumption and loss sources, which bring new challenges to the system design. In fact, a key challenge of WPT systems is their inherently low end-to-end power transfer efficiency, which is given by
\begin{equation}\label{eq:wptbase}
    e = \underbrace{\frac{P^{t}_{rf}}{P^{t}_{dc}}}_{e_1} \times \underbrace{\frac{P^{r}_{rf}}{P^{t}_{rf}}}_{e_2} \times \underbrace{\frac{P^{r}_{dc}}{P^{r}_{rf}}}_{e_3} = \frac{P^{r}_{dc}}{P^{t}_{dc}}.
\end{equation}
Herein, $e_1$ is impacted by the power consumption sources at the transmitter side, such as high-power amplifier, while the channel losses, i.e., shadowing and fading, can decrease $e_2$. Note that the RF-to-DC conversion inefficiency at the receiver side influences $e_3$. Thus, it is important to address these and provide novel solutions to maximize efficiency. 

Waveform optimization is required to enhance efficiency at the ER. In fact, multi-tone waveforms can leverage the ER's non-linearity and enhance the performance in terms of DC harvested power \cite{clerckx-nonlinearharvester, clerckx2018beneficial}. Also, beamforming techniques can compensate for wireless channel losses by pointing the transmit signal effectively toward the ERs' direction. The transmitter architecture determines the type of beamforming, which can be either active or passive. An active beamforming leverages active antenna elements connected to dedicated RF chains, while a passive beamforming utilizes low-cost nearly passive elements. For instance, a fully digital transmitter comprises active antenna elements, each with a dedicated RF chain, leading to high cost but offering the best beamforming gain. On the other hand, analog architectures relying, e.g., on phase shifters, can reduce the cost but also degrade the performance. There exist hybrid architectures, which combine the two mentioned architectures, leading to a reduced number of RF chains and a tradeoff between cost/complexity and performance \cite{hybridbeamsurvey}. 

Traditional analog and hybrid architectures still need to cope with the challenges caused by complex analog networks. There are novel transmit architectures that avoid this additional complexity and provide beamforming gains at a much lower cost. For instance, dynamic metasurface antennas (DMA) utilize a limited number of RF chains, each one feeding multiple antenna elements on a waveguide \cite{DMAWPT,MyEBDMA}. Moreover, reconfigurable intelligent surfaces (RIS) have become popular to enhance performance in wireless systems, especially in the presence of blockages/obstacles. This emerging technology may rely on planar surfaces comprising nearly passive scattering elements that introduce amplitude and phase changes to incident electromagnetic waves. Therefore, reflective-type RIS can smartly tune the reflected signal and point it toward the desired direction, providing significant beamforming gains \cite{RIS-basis}. In addition to reflective-type RIS, there exist some novel RIS-based architectures such as active RIS providing extra capabilities at the transmitter \cite{active_ris}, or reflective-refractive models \cite{reflect_refract_RIS}. 

A recent generalization of diagonal RIS (D-RIS) is given by beyond diagonal RIS (BD-RIS), which is characterized by scattering matrices not constrained to be diagonal, thus elements/ports interconnected via tunable impedances \cite{bdris_main_ref}. The optimality of BD-RIS in lossless designs was proved in \cite{BD-RIS_scattering_clerckx}. BD-RIS, through reconfigurable inter-connections, enables impinging waves to flow through the surfaces, hence a new degree of control and higher flexibility to manipulate waves by forming fully-, group-, tree-, and forest-connected structures \cite{BD-RIS_scattering_clerckx, BD-RIS_General}. The former provides the highest flexibility in shaping the reflected signal, though lower complexities based on tree- and forest-connected architectures are attractive to achieve the performance-complexity Pareto frontier \cite{BD-RIS_General, Pareto_Frontier_nerini}. Note that decreasing the number of connections between the scattering elements reduces both the performance gains and complexity, showcasing an interesting performance-complexity trade-off. BD-RIS is a general framework enabling multiple modes, such as reflective, hybrid, transmissive and reflective, and multi-sector, while encompassing other RIS architectures as special cases \cite{Multi_Sector_BDRIS, TR_REF_BDRIS}.

\subsection{Prior Works}

Waveform and beamforming optimization for WPT with non-linear EH has received significant research attention, although most works focus on traditional fully digital transmitters. For instance, the authors of \cite{BFRFDCsingletone} propose transmit and receive beamforming solutions for a fully digital multiple-input multiple-output (MIMO) system to increase the DC harvested power, demonstrating the superiority of RF over DC combining. The authors of \cite{clreckxWFdesign} investigate the design of WPT waveforms for different EH models and show that waveforms designed considering nonlinear EH yield substantial improvements in harvested DC power. Interestingly, in \cite{brunolowcomp}, the authors propose a low-complexity channel-adaptive waveform design for single-ER WPT systems, leading to near-optimal EH performance. In  \cite{azarbahram2024deepreinforcementlearningmultiuser}, a low-complexity near-optimal beamforming is proposed for maximizing the weighted sum harvested DC power in multi-ER scenarios. Moreover, the waveform and beamforming optimization problem for large-scale multi-antenna WPT systems has been investigated in \cite{largescale}, demonstrating the effectiveness of large-scale WPT deployments in improving the power transfer efficiency. The authors of \cite{jointWFandBFMIMOclreckx} leverage beamforming and optimized multi-sine waveforms in a MIMO WPT system and show that the joint optimization leads to significant gains compared to beamforming-only designs in terms of harvested power. All mentioned studies considered just the EH non-linearity, while \cite{SISOAllClreckx} considers both the power amplifier and rectifier's non-linearity. Therein, it is shown that the waveform design considering power amplifier non-linearity leads to significant gains compared to those focusing only on the rectifier.  Moreover, multiple works, e.g., \cite{OnelRadioStripes, onellowcomp, azarbahram2023radio}, focus on RF power maximization by considering linear EH for fully digital WPT.

Novel low-cost WPT transmit architectures for reducing the implementation cost/complexity is another topic gaining attention recently. The authors of \cite{azarbahram2024waveform} study the waveform and beamforming optimization for DMA-assisted WPT systems and show that DMA can be used to meet the DC requirement of the ERs with considerably fewer RF chains compared to a fully-digital counterpart. The authors of \cite{hybrid_SWIPT} propose a minimum-power beamforming design for simultaneous wireless information and power transfer (SWIPT) systems with a hybrid transmit architecture. The authors of \cite{ris_rf_power} derive the number of RIS elements in an RIS-aided WPT system required to outperform traditional WPT relying on active antenna elements. Interestingly, the harvested power maximization problem is addressed in \cite{clerckxRISWPT,ris_swipt_nonlin} for RIS-aided WPT and SWIPT systems with non-linear EH, respectively. Therein, the authors highlight the extra beamforming gains on the harvested DC power provided by RISs, although only focusing on conventional D-RIS. Note that BD-RIS architectures have been widely investigated for communication purposes \cite{BD-RIS_Graph, BD-RIS_General, BD-RIS_scattering_clerckx, bdris_main_ref, direnzoclerckxmutual}, while current research on RIS-aided WPT is still limited to conventional diagonal (local) RIS \cite{ris_swipt_nonlin, clerckxRISWPT}. Notably, another important yet unexplored topic is the use of BD-RIS for SWIPT systems with non-linear EH models. However, this work focuses on WPT alone, noting that maximizing the harvested DC power is first crucial as it directly impacts the achievable information transmission rate in SWIPT systems. 


\subsection{Contributions}

Efficient WPT system design and optimization requires accurate non-linear EH models \cite{clerckx2018beneficial, clreckxWFdesign}. However, most existing works employing non-linear EH primarily focus on traditional fully digital WPT designs. In contrast, this paper investigates BD-RIS-assisted WPT under non-linear EH models, a novel direction not yet explored.\footnote{This work is an extension of our conference paper \cite{myBDrisWCNC}, wherein we present a BD-RIS beamforming and waveform optimization algorithm. In the current work, we propose two other solutions for BD-RIS beamforming and an efficient iterative solution for waveform optimization. Moreover, we provide thorough algorithm performance comparisons and a comprehensive waveform behavior analysis.} Our main contributions are:
\par \textit{\textbf{First}}, we formulate a joint beamforming and waveform optimization problem for a single-input single-output (SISO) multi-carrier BD-RIS-aided WPT system to maximize the DC harvested power. For this, we assume there is no direct path between the ET and ER. Our problem requires novel solutions due to the following reasons: \\
i) The methods available in the literature for WPT with D-RIS \cite{clerckxRISWPT} do not apply to BD-RIS due to the additional mathematical complexity caused by the non-diagonal scattering matrix. Specifically, when optimizing BD-RIS, in contrast to D-RIS, all the elements of the scattering matrix can be non-zero, and a unitary constraint must be forced on the designed matrix, which significantly increases the mathematical complexity and calls for novel mathematical solutions.\\
ii) The methods available for optimizing BD-RIS, e.g., as in \cite{BD-RIS_Graph, BD-RIS_General}, which focus on RF power maximization, do not apply to the WPT case with non-linear EH. The reason is that the DC harvested power depends on the fourth moment of the signal in addition to the second moment, a.k.a., RF power. Specifically, RF power maximization only deals with the complexity caused by coupling between the waveform and beamforming. On the other hand, DC power maximization introduces inter-subcarrier dependencies in multi-carrier waveform and beamforming design, requiring novel methods to exploit EH non-linearity effectively \cite{clreckxWFdesign}.
\par \textit{\textbf{Second}}, we decouple the waveform optimization and beamforming problem via alternating optimization, and propose: \\
i) An efficient iterative method relying on successive convex approximation (SCA) and Karush–Kuhn–Tucker (KKT) conditions for the waveform optimization problem. Specifically, we prove that the objective function is convex w.r.t the waveform, hence we approximate it using its first-order Taylor approximation. The resulting problem is convex in the local neighborhood, and a closed-form solution can be obtained using KKT conditions and can then be iteratively updated.\\
ii) A beamforming optimization framework relying on SCA and semi-definite programming (SDP), which models the rank-1 constraint using the atomic and Frobenius norms \cite{shabir2024electromagneticallyconsistentoptimizationalgorithms}. Unlike semi-definite relaxation (SDR), this approach guarantees a rank-1 feasible solution without relaxation. \\
iii) SDR-based beamforming method, which relaxes the rank-1 constraint, but has lower time complexity compared to the one relying on SDP due to fewer approximations. The proposed SDP and SDR-based methods differ fundamentally from \cite{clerckxRISWPT}, as they address the added complexities introduced by the non-diagonal structure of the BD-RIS scattering matrix and the imposed unitary constraint. Notably, the SDP-based method further distinguishes itself by preserving the rank-1 constraint, in contrast to SDR-based approaches that relax this constraint.\\
iv) SCA-based beamforming, which leverages Taylor expansion to approximate the problem by dividing the unitary constraint into two separate regions. This algorithm has lower computational complexity than SDR, but yields a suboptimal solution due to the constraint relaxation.\\
v) An efficient iterative beamforming method by leveraging the Neumann series approximation, SCA, and KKT conditions. Specifically, we approximate the problem with local convex subproblems by leveraging the Neumann series. Then, we leverage the KKT conditions to solve each subproblem and use an iterative method using low-complexity closed-form expressions to update the optimization variables. The final solution is iteratively obtained using the SCA method. This method reduces the computational complexity from the fourth to the second degree of the number of RIS elements compared to the SDP and SDR-based approaches.
\par \textit{\textbf{Third}}, we verify by simulations that:\\
i) The proposed algorithms converge, with convergence time scaling with the number of sub-carriers and RIS elements. \\
ii) The SDR-based algorithm achieves the optimal performance for a single-carrier system, while it outperforms the proposed iterative method in terms of harvested DC power for multi-carrier systems.\\
iii) Increasing the number of sub-carriers or RIS elements improves the performance in terms of harvested DC power. \\
iv) The frequency selectivity of the channel impacts how BD-RIS shapes the cascade channel, leading to performance changes. Specifically, when the incident and reflective channels tend to be frequency-flat, the BD-RIS can shape the cascade channel more effectively since it affects all the sub-carriers similarly. On the other hand, in the frequency-selective case, the variation among sub-carriers in the cascade channel is greater because the BD-RIS configuration affects each sub-carrier differently. \\
v) The transmit power budget impacts the relative power allocation of the designed waveform depending on the input RF power to the rectifier. \\
vi) BD-RIS does not introduce any additional performance gains in terms of harvested DC power compared to D-RIS in far-field line-of-sight (LoS) channels (in the absence of mutual coupling) even with multi-carrier signals. On the other hand, in Rician channels with non-LoS (NLoS) components, BD-RIS outperforms D-RIS by effectively tuning the cascade channel, leading to increased harvested DC power. This extends the findings of \cite{BD-RIS_scattering_clerckx} to multi-carrier WPT.

The remainder of the paper is structured as follows. Section~\ref{sec:system} introduces the system model and problem formulation. The optimization framework for beamforming and waveform design is elaborated in Section~\ref{sec:OPTIMIZATION}. Section~\ref{sec:numerical} presents the numerical results, while Section~\ref{sec:conclude} concludes the paper.

\textbf{Notations:} Bold lower-case and upper-case letters represent vectors and matrices. The $\ell_2$-norm operator is denoted by $\norm{\cdot}$, while  $\norm{\cdot}_\star$ and $\norm{\cdot}_F$ are the atomic and Frobenius norm, respectively. $\Re\{\cdot\}$ and  $\Im\{\cdot\}$ denote the real and imaginary parts of a complex matrix or scalar, and $|\cdot|$ represents the absolute value operator. Moreover, $(\cdot)^T$, $(\cdot)^H$, and $(\cdot)^\star$ denote the transpose, transposed conjugate, and conjugate, respectively. Additionally, $[\cdot]_{i,l}$ denotes the element in the $l$th column and the $i$th row of a matrix. The vectorization operator is represented by $\mathrm{Vec}(\cdot)$, and its inverse is denoted by ${\mathrm{Vec}^{-1}_{D \times D}(\cdot)}$. Finally, $\mathbf{I}_D$ represents a $D \times D$ identity matrix and $\mathrm{diag}(\mathbf{a})$ is a diagonal matrix with its main diagonal being the elements of vector $\mathbf{a}$.

\section{System Model \& Problem Formulation}\label{sec:system}

We consider a SISO WPT system with $N$ sub-carriers aided by a BD-RIS consisting of $M$ elements. The adoption of multi-tone waveforms is motivated by their ability to exploit the rectifier's non-linearity to improve the performance \cite{clerckx2018beneficial}. For simplicity, we assume perfect channel state information at the ET as in \cite{clerckxRISWPT, bdris_main_ref, clreckxWFdesign}. Moreover, we assume there is no direct path between the ET and the ER, which is motivated by typical RIS applications to improve coverage in areas with blockages. By excluding the direct link, we isolate and highlight the contribution of the BD-RIS to the overall performance, allowing for a fair and focused comparison between different RIS architectures. The system model and the path of the incident and reflecting channels at the $n$th sub-carrier are illustrated in Fig.~\ref{fig:systemmodel}.a. This section describes the transmit and receive signals, BD-RIS, rectenna, and the problem formulation.

\begin{remark}
    The SISO setup is chosen to focus on the core contributions of this work, namely, the investigation of frequency diversity, BD-RIS beamforming gains, and nonlinear energy harvesting. This provides a clear and tractable framework for analyzing the behavior of BD-RIS systems and enables fair comparisons with conventional RIS designs. Importantly, the use of BD-RIS introduces significant mathematical complexity due to its general scattering structure, particularly when combined with nonlinear rectenna models and multi-carrier waveform optimization. Therefore, the considered SISO scenario serves as a necessary starting point for characterizing the fundamental gains of BD-RIS in WPT.
\end{remark}

\begin{figure}[t]
    \centering
    \includegraphics[width=0.6\columnwidth]{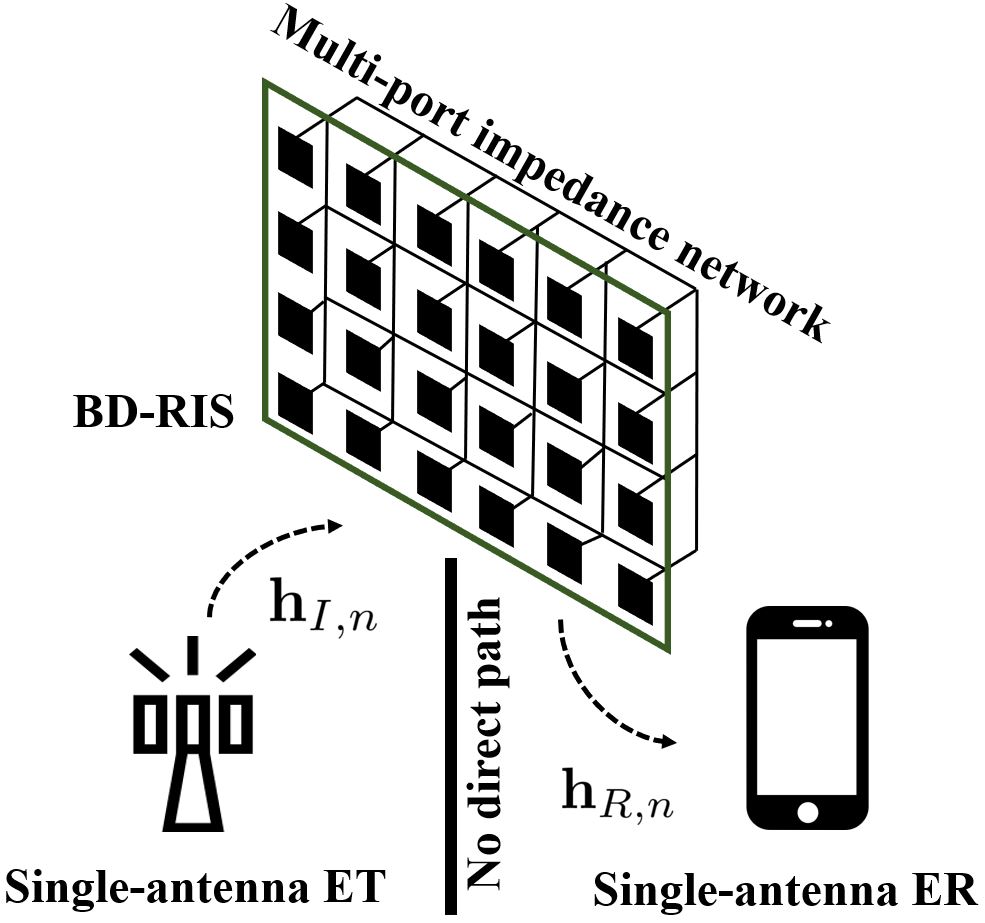} \\
    \vspace{2mm}
    \includegraphics[width=0.45\columnwidth]{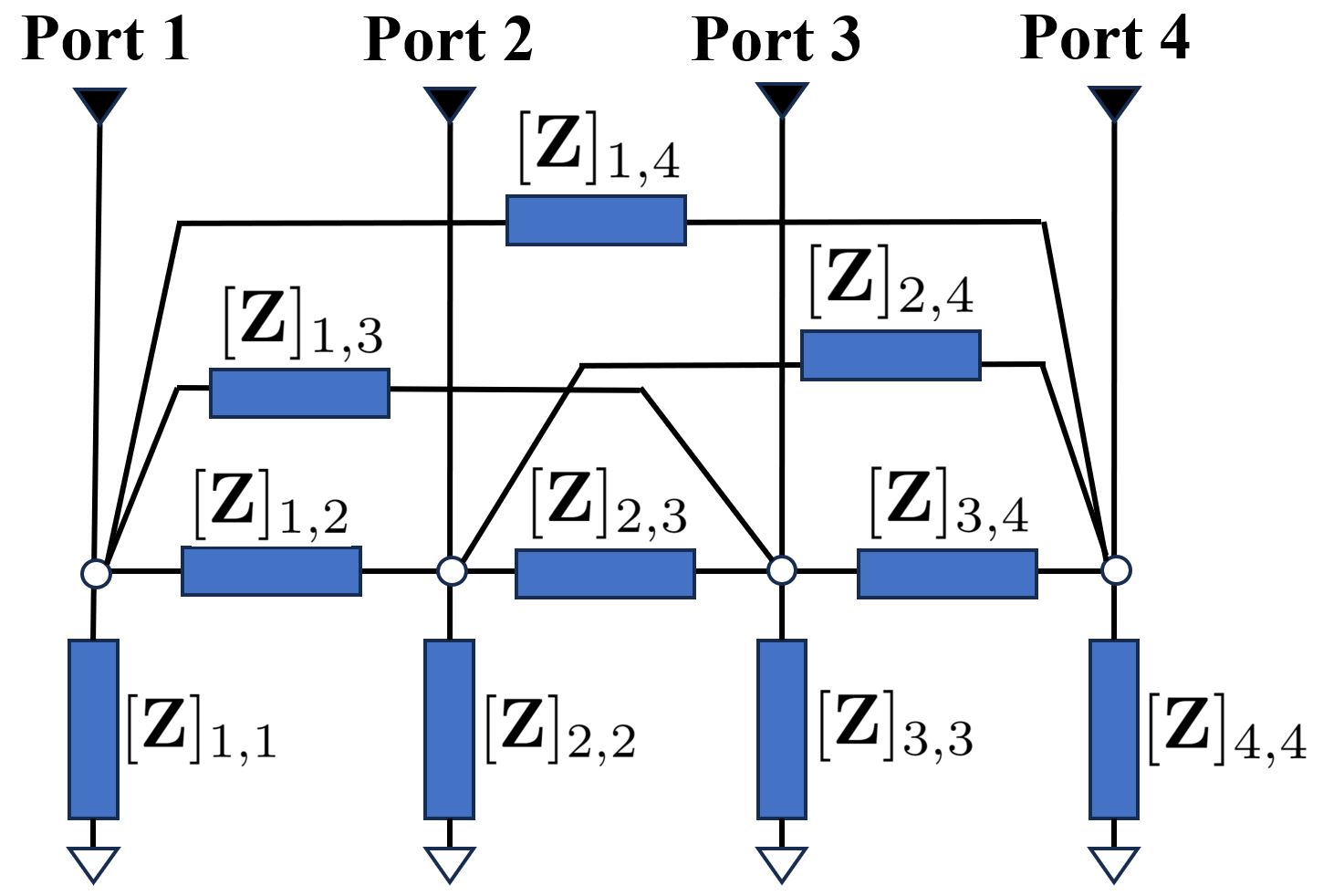}  
    \hspace{4mm}\includegraphics[width=0.45\columnwidth]{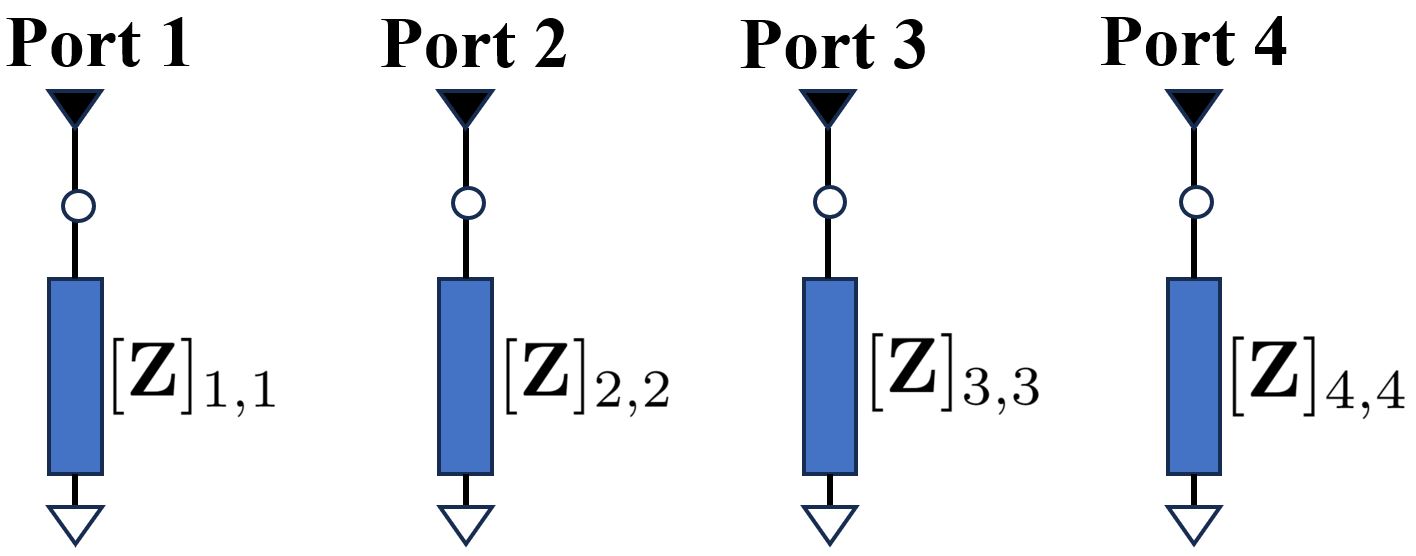}
    \caption{Illustration of (a) the system model comprising a single-antenna ET, a BD-RIS, and a single-antenna ER; and (b) examples of 4-port impedance networks for fully connected BD-RIS (bottom-left) and D-RIS (bottom-right).}
    \label{fig:systemmodel}
\end{figure}

    
\subsection{Transmit/Receive Signals and BD-RIS}

We consider $N$ sub-carriers with $f_n = f_c + (n - 1)\Delta f$ being the frequency of the $n$th sub-carrier, where $f_c$ is the lowest frequency and $\Delta f$ is the sub-carrier spacing. Thus, the transmit signal at time $t$ can be written as $\Re\bigl\{ \sum_{n = 1}^{N} s_n e^{j2\pi f_n t} \bigr\}$, where $s_n$ is the complex weight of the $n$th sub-carrier.\footnote{We assume a linear model for the power amplifier, being mainly interested in the impact of non-linear rectennas. Notably, this is a valid assumption for a proper transmit power such that the power amplifier does not enter the saturation region \cite{SISOAllClreckx}.} The transmit signal propagates through the wireless channel. We denote by $\mathbf{h}_{I,n} \in \mathbb{C}^{M\times 1}$ the channel between the transmitter and RIS and by $\mathbf{h}_{R,n} \in \mathbb{C}^{M\times 1}$ the channel between the RIS and receiver for the $n$th sub-carrier.

Assuming a multiport network model, each D-RIS element is modeled as a port connected to an independent tunable impedance \cite{BD-RIS_scattering_clerckx,univnerini,bdrissportdesign}. This leads to the so-called D-RIS model, which is mathematically characterized by diagonal phase-shift matrices. In BD-RIS, the matrix of tunable elements  $\mathbf{Z} \in \mathbb{C}^{M \times M}$ has non-zero off-diagonal entries \cite{BD-RIS_General}. Thus, the scattering matrix can be formulated as \cite{univnerini}
\begin{equation}\label{eq:scattering_matrix}
    \mathbf{\Theta} = (\mathbf{Z} + Z_0\mathbf{I}_M)^{-1}(\mathbf{Z} - Z_0\mathbf{I}_M),
\end{equation}
where $Z_0$ denotes the reference impedance. In the case of reciprocal and lossless BD-RIS,\footnote{For simplicity and since the goal is to investigate the gains of BD-RIS for WPT, we consider the case with no mutual coupling between different ports.} the scattering matrix is symmetric ($\mathbf{\Theta} = \mathbf{\Theta}^T)$ and unitary ($\mathbf{\Theta}^H\mathbf{\Theta} = \mathbf{I}_M$). Note that in the case of fully connected BD-RIS, all the elements in $\mathbf{Z}$ can be non-zero, while for other sub-architectures, some off-diagonal entries become zero, and for D-RIS, $\mathbf{Z}$ becomes diagonal. Fig.~\ref{fig:systemmodel}.b illustrates an example of a 4-port impedance network for fully connected BD-RIS and D-RIS.

The cascade channel at the $n$th sub-carrier is written as $h_n = \mathbf{h}_{R,n}^T\mathbf{\Theta}{\mathbf{h}_{I,n}}$,\footnote{The underlying general model and assumptions can be seen in \cite{bdrissportdesign}.} and the received signal at time $t$ is given by $y(t) = \sum_{n= 1}^{N} \Re\bigl\{ s_n h_n e^{j2\pi f_n t}\bigr\}$.



\subsection{Rectenna}

The ER model, i.e., rectenna, consists of the antenna equivalent circuit and a single-diode rectifier for transforming the RF received signal into DC \cite{azarbahram2024waveform, clerckx2018beneficial, clreckxWFdesign}. By leveraging this model, assuming perfect matching, and using Taylor's expansion, the output current of the rectenna is given by 
\begin{equation}\label{eq:dccurrent}
    i_{dc} =  \sum_{i\ even, i\geq2}^{\bar{n}} K_i \mathbb{E}\bigl\{y(t) ^i\bigr\},
\end{equation}
where $K_2$, $K_4$ are determined by the characteristics of the rectifier circuit. One can consider $\bar{n} = 4$ and model the main sources of non-linearity using the fourth order term, while the second order term represents the ideal linear rectenna model \cite{clreckxWFdesign, clerckxRISWPT}. Note that \eqref{eq:dccurrent} requires averaging over time samples, which makes the optimization extremely challenging due to the need to consider a high sampling rate for achieving sufficient accuracy. Fortunately, by expressing \eqref{eq:dccurrent} in the frequency domain and assuming $\bar{n} = 4$, we obtain \cite{clreckxWFdesign}
\begin{align}\label{eq:parseval}
    i_{dc} &= \frac{K_2}{2} \sum_{n}  |{s_n h_{n}}|^2 + 
    \frac{3K_4}{8} \sum_{\substack{n_0, n_1, n_2, n_3 \\ n_0 + n_1 = n_2 + n_3}} {({h}_{n_0} {s}_{n_0})}^\star \ldots\nonumber\\
    &\qquad\qquad\qquad\qquad ({h}_{n_1} {s}_{n_1})^\star{({h}_{n_2} {s}_{n_2})} {({h}_{n_3} {s}_{n_3})},
\end{align}
which is more tractable than the sampling-dependent model. Notice that $i_{dc}$ is a real-valued scalar obtained by averaging the second and fourth moments of the received real-valued signal.


\subsection{Problem Formulation}

We aim to maximize the harvested power given a transmit power budget, $P_T$, and assuming perfect channel state information (CSI). By leveraging the fact that the DC harvested power is an increasing function of the DC at the ER, the problem can be formulated as
\begin{subequations}\label{graph_problem}
\begin{align}
\label{graph_problem_a} \maximize_{s_n, \boldsymbol{\Theta}} \quad &  i_{dc} \\
\textrm{subject to} \label{graph_problem_c}  \quad & \boldsymbol{\Theta} = \boldsymbol{\Theta}^T, \\
\label{graph_problem_d}  \quad & \boldsymbol{\Theta}^H \boldsymbol{\Theta} = \mathbf{I}_M, \\
\label{graph_problem_e}  \quad & \frac{1}{2} \sum_{n = 1}^{N} |{s_n}|^2 \leq P_T.
\end{align}
\end{subequations}
Notably, \eqref{graph_problem} is a non-convex problem due to the unitary constraint, the $i_{dc}$ non-linearity, and the coupling between the variables. To cope with this, we decouple \eqref{graph_problem} into separate problems for optimizing waveform and beamforming and leverage alternating optimization to solve it. 

\begin{remark}
    Note that the power consumption of BD-RIS scales with the number of interconnections between its ports, which directly increases the circuit complexity and, consequently, the overall power consumption compared to traditional D-RIS architectures \cite{RIS_power_conusmption_TWC_2024}. However, this work focuses solely on the output power of the active transmitter, and thus, we assume that the BD-RIS has sufficient power available for its internal operation and impedance network tuning.
\end{remark} 
\begin{remark}
    For conventional diagonal RIS, two main CSI acquisition strategies exist \cite{CSI_survey_RIS}: i) channel estimation using partially activated RIS elements by RF chains, offering low overhead and compatibility with BD-RIS, but incurring additional hardware costs; ii) Cascaded channel estimation via RIS using pilot designs, which avoids RF chains but needs to be adapted for BD-RIS due to its tightly coupled channel structure. A BD-RIS-specific approach is explored in \cite{channel_estim_BDRIS}.
\end{remark} 

\section{Beamforming and Waveform Optimization}\label{sec:OPTIMIZATION}

Here, we provide optimization methods for solving the beamforming and waveform-related subproblems. First, we propose a low-complexity iterative method for waveform optimization problems relying on SCA and KKT conditions. Then, we leverage SDP, SDR, SCA, and KKT conditions to solve the beamforming subproblem using different approaches.


\subsection{Waveform Optimization with Fixed $h_n$}

First, we assume the cascade channel $h_n$ to be fixed and optimize the signal weights $s_n, \forall n$. We proceed by defining $s_n = \bar{s}_ne^{j\tilde{s}_n}$, where $\bar{s}_n$ and $\tilde{s}_n$ are the amplitude and the phase of $s_n$, respectively. Similarly, we can write $h_n = \bar{h}_ne^{j\tilde{h}_n}$. It is evident that the optimal $\tilde{s}_n$ must compensate for the phases in \eqref{eq:parseval}, leading to a real-valued $i_{dc}$ \cite{clerckx-nonlinearharvester}. Thus, it is sufficient to have $\tilde{s}_n^* = - \tilde{h}_n, \forall n$ leading to $\bar{s}_ne^{j\tilde{s}_n^*} h_n = \bar{s}_n\bar{h}_n, \forall n$. Now the only goal of the optimization is to find the optimal amplitudes for the signal weights at different sub-carriers. For this, the problem for a given $\boldsymbol{\Theta}$ can be reformulated  as
\begin{subequations}\label{fixedPS}
\begin{align}
\label{fixedPS_a} \maximize_{\bar{s}_n} \quad &  {i}_{dc} \\
\textrm{subject to} \label{fixedPS_b}  \quad & \frac{1}{2} \sum_{n = 1}^{N} \bar{s}_n^2 \leq P_T.
\end{align}
\end{subequations}
\begin{lemma}\label{theorem:1}
The DC current $i_{dc}$ in \eqref{eq:parseval} is convex w.r.t. $\bar{s}_n$.
\end{lemma}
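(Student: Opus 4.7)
My plan is to reduce convexity in $\bar{s}$ to convexity in the auxiliary variables $a_n := \bar{h}_n \bar{s}_n \geq 0$. After the phase optimization already carried out in the lines preceding the lemma, which gives $s_n h_n = \bar{s}_n \bar{h}_n$, equation~\eqref{eq:parseval} becomes a real quartic polynomial in $a = (a_1, \dots, a_N)$. Since $a_n$ is an affine (in fact positive-scaling) function of $\bar{s}_n$ and convexity is preserved under affine composition, it suffices to prove that $i_{dc}$ is convex in $a$. The quadratic term $\tfrac{K_2}{2}\sum_n a_n^2$ is a positively weighted sum of squares and is therefore trivially convex; the burden lies entirely on the fourth-order term. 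This is the main obstacle, because each individual monomial $a_{n_0}a_{n_1}a_{n_2}a_{n_3}$ has an indefinite Hessian and convexity cannot be inferred term-by-term.

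To handle the quartic piece I would exploit its convolution structure. Setting $p(z):=\sum_{n=1}^{N} a_n z^n$ and $B_k(a):=\sum_{n_0+n_1=k} a_{n_0}a_{n_1}$, the frequency-matching constraint $n_0+n_1=n_2+n_3$ gives
\begin{equation*}
\sum_{\substack{n_0,n_1,n_2,n_3 \\ n_0+n_1=n_2+n_3}} a_{n_0}a_{n_1}a_{n_2}a_{n_3} \;=\; \sum_k B_k(a)^2.
\end{equation*}
Since $B_k(a)$ is precisely the coefficient of $z^k$ in $p(z)^2$, Parseval's identity on the unit circle yields
\begin{equation*}
\sum_k B_k(a)^2 \;=\; \frac{1}{2\pi}\int_0^{2\pi}\bigl|p(e^{j\theta})\bigr|^4\, d\theta.
\end{equation*}
For each fixed $\theta$, $\bigl|p(e^{j\theta})\bigr|^2 = \bigl(\sum_n a_n\cos(n\theta)\bigr)^2 + \bigl(\sum_n a_n\sin(n\theta)\bigr)^2$ is a sum of squares of affine functions of $a$, hence non-negative and convex in $a$.

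Convexity of the integrand $\bigl|p(e^{j\theta})\bigr|^4$ then follows from the standard fact that the square of a non-negative convex function is convex: with $g := \bigl|p(e^{j\theta})\bigr|^2$, one has
\begin{equation*}
\nabla_a^2\bigl(g^2\bigr)\;=\;2\,\nabla_a g\,(\nabla_a g)^{T} + 2g\,\nabla_a^2 g \;\succeq\; 0,
\end{equation*}
since all three factors are non-negative/PSD. Integration over $\theta$ preserves convexity, so the fourth-order term is convex in $a$ (using $K_4>0$, which is standard for the diode Taylor expansion). Combined with the convex quadratic term, this yields convexity of $i_{dc}$ in $a$, and the affine relation $a_n=\bar{h}_n\bar{s}_n$ transfers the property back to $\bar{s}_n$, establishing the lemma.
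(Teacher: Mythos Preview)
Your argument is correct, but it takes a considerably more elaborate route than the paper. The paper does not work with the frequency-domain expression \eqref{eq:parseval} at all: it simply returns to the time-domain definition $i_{dc}=K_2\,\mathbb{E}\{y(t)^2\}+K_4\,\mathbb{E}\{y(t)^4\}$, observes that $y(t)$ is affine in $\bar{s}_n$ for each $t$, notes that $y\mapsto K_2 y^2+K_4 y^4$ is convex on $\mathbb{R}$, and concludes by the affine-composition rule plus linearity of the average. Your approach instead stays inside \eqref{eq:parseval} and must reconstruct the ``integral of a pointwise-convex integrand'' structure via the convolution identity $\sum_k B_k(a)^2$ and Parseval, effectively rediscovering the time average in the guise of $\tfrac{1}{2\pi}\int_0^{2\pi}|p(e^{j\theta})|^4\,d\theta$. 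The two arguments are structurally the same at the core (affine map $\to$ even power $\to$ average), but the paper's is a two-line proof because the time-domain representation hands you the integral form for free; your version is self-contained in the frequency domain but pays for it with the Parseval machinery and the explicit Hessian computation for $g^2$.
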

\begin{proof}
    Note that ${y(t)}$ is linear w.r.t. $\bar{s}_n$. By leveraging the second-order convexity condition \cite{boyd2004convex} and the linearity of the mathematical average operator, it can be easily verified that $i_{dc}$ is convex w.r.t. $y(t)$. Then, $i_{dc}$ is convex w.r.t. $\bar{s}_n$ since the composition of an affine with a convex function is convex.
\end{proof}
According to Lemma~\ref{theorem:1}, problem \eqref{fixedPS} is not convex since it deals with the maximization of a convex function. However, the convexity of \eqref{fixedPS_a} leads to the fact that $\Tilde{i}_{dc}(\bar{s}_n , \bar{s}_n^{(l)}) \leq i_{dc}$, where $\Tilde{i}_{dc}(\bar{s}_n , \bar{s}_n^{(l)})$ is the first order Taylor's expansion of $i_{dc}$ at the local point $\bar{s}_n^{(l)}$, formulated as
\begin{align}
    \Tilde{i}_{dc}(\bar{s}_n , \bar{s}_n^{(l)}) &= i_{dc}\big|_{\bar{s}_n = \bar{s}_n^{(l)}} + \sum_{n = 1}^{N} g(\bar{s}_n^{(l)})(\bar{s}_n - \bar{s}_n^{(l)}),
\end{align}
where
\begin{multline}\label{eq:derivativetaylor}
   g(\bar{s}_n) = K_2 \bar{h}_{n}^2{\bar{s}}_n + \frac{3K_4}{2}\biggl[{\bar{h}_n}^4{\bar{s}_n}^3 + 
    2 \sum_{n_1 \neq n} {\bar{h}_n}^2{\bar{h}_{n_1}}^2{\bar{s}_{n_1}}^2\bar{s}_n+ \\
    \sum_{\substack{n_2, n_3 \\ n_2 + n_3 = 2n \\ n_2 \neq n_3}} \bar{h}_{n_2}\bar{h}_{n_3}\bar{h}_{n}^2\bar{s}_{n_2}\bar{s}_{n_3}\bar{s}_{n} + \\
    \sum_{\substack{n_1, n_2, n_3 \\ -n_1 + n_2 + n_3 = n \\ n \neq n_1 \neq n_2 \neq n_3}} \bar{h}_{n_1}\bar{h}_{n_2}\bar{h}_{n_3}\bar{s}_{n_1}\bar{s}_{n_2}\bar{s}_{n_3} \bar{h}_{n}
    \biggr].
\end{multline}
Hereby and by removing the constant terms (since they do not impact the optimization procedure), the problem can be transformed into a convex problem at the neighborhood of the initial point $\bar{s}_n^{(l)}$ as 
\begin{subequations}\label{fixedPS_approx}
\begin{align}
\label{fixedPS_approx_a} \minimize_{\bar{s}_n} \quad & \xi_1 = -\sum_{n = 1}^{N}  g(\bar{s}_n^{(l)})\bar{s}_n \\
\textrm{subject to} \label{fixedPS_approx_b}  \quad & \frac{1}{2} \sum_{n = 1}^{N} {\bar{s}_n}^2 \leq P_T.
\end{align}
\end{subequations}
Finally, the problem can be iteratively solved using standard convex optimization tools, e.g., CVX \cite{cvxref}.

However, utilizing standard solvers for the optimization problem in \eqref{fixedPS_approx} requires solving a quadratic program \cite{boyd2004convex} in each SCA iteration. Notably, the complexity of quadratic programs scales with a polynomial function of the problem size, while the degree of the polynomial mainly depends on the type of solver. Let us consider a simple solver based on the Newton method, which has $\mathcal{O}(n^3)$ complexity \cite{boyd2004convex}, where $n$ is the problem size, leading to $n$ scaling with the number of variables $N$. Therefore, solving problems such as \eqref{fixedPS_approx} using standard tools is not computationally efficient.

Let us proceed by writing the Lagrangian of \eqref{fixedPS_approx} as 
\begin{align}\label{eq:lagrangeweight}
    \mathcal{L}(\bar{s}_n) = - \sum_{n =1}^N g(\bar{s}_n^{(l)}) \bar{s}_n + {\lambda^{(l)}}\bigg(\frac{1}{2}\sum_{n = 1}^N {\bar{s}_n}^2 - P_T\bigg),
\end{align}
where $\lambda^{(l)}$ is the dual variable corresponding to \eqref{fixedPS_approx_b}. Then, we can write the derivative of \eqref{eq:lagrangeweight} w.r.t. $\bar{s}_n$ as
\begin{equation}\label{eq:derivsn}
    \frac{\partial\mathcal{L}(\bar{s}_n)}{\partial \bar{s}_n} = - g(\bar{s}_n^{(l)}) + \lambda^{(l)}\bar{s}_n, \quad \forall n.
\end{equation}
By setting \eqref{eq:derivsn} to zero, we attain 
\begin{equation}
    \label{eq:S_s_star}\bar{s}_n^* = \frac{g(\bar{s}_n^{(l)})}{\lambda^{(l)}}, \quad \forall n.
\end{equation}
Moreover, the optimal solution must utilize the whole transmit power budget, i.e., $\frac{1}{2} \sum_{n = 1}^{N} {\bar{s}_n}^2 = P_T$. Thus, by substituting \eqref{eq:S_s_star} into the latter equality, the dual variable $\lambda^{(l)}$ is given by 
\begin{equation}
    \label{eq:S_lamb_update}\lambda^{(l)} = \sqrt{\frac{1}{2P_T}\sum_{n = 1}^N {g(\bar{s}_n^{(l)})}^2}.
\end{equation}
Finally, we introduce an additional parameter $\rho_s \leq 1$ to control the convergence of the solution such that the sub-gradient update is given by 
\begin{equation}\label{eq:S_s_update}
    \bar{s}_n^{(l + 1)} = \bar{s}_n^{(l)} + \rho_s(\bar{s}_n^* - \bar{s}_n^{(l)}),
\end{equation}
which can be used to obtain the local solution of \eqref{fixedPS_approx}. Hereby, we can iteratively solve \eqref{fixedPS}, which leads to considerably lower computation complexity since closed-form expressions are utilized in each SCA iteration. Algorithm~\ref{alg:waveformsca} illustrates the proposed iterative method for obtaining $s_n, \forall n$. First, the scaled matched filter approach, proposed in \cite{brunolowcomp}, is used to initialize the signal weights such that
\begin{equation}\label{eq:SMF}
    s_{n} = e^{-j{{\tilde{h}}_{n}}} {{\bar{h}_{n}}^{\beta}}\sqrt{\frac{2P_T}{\sum_{n_0 = 1}^{N}{\bar{h}_{n_0}}^{2\beta}}}, \quad \forall n.
\end{equation}
Then, the optimization variables and the corresponding dual variable are updated iteratively until convergence.

\begin{theorem}\label{theo_conv_proof_WF}
    Algorithm~\ref{alg:waveformsca} converges to a stationary point of problem~\eqref{fixedPS}.
\end{theorem}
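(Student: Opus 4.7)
My plan is to establish convergence of Algorithm~\ref{alg:waveformsca} by a standard SCA argument built on three ingredients: monotone ascent of $i_{dc}$ along the iterates, boundedness via compactness, and a fixed-point/KKT characterization of the limit.

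Lemma~\ref{theorem:1} shows $i_{dc}$ is convex in $\bar{s}_n$, so its first-order Taylor expansion $\tilde{i}_{dc}(\bar{s}_n,\bar{s}_n^{(l)})$ globally underestimates $i_{dc}$ and coincides with it at the expansion point. The subproblem \eqref{fixedPS_approx} maximizes this linear surrogate over the convex feasible set \eqref{fixedPS_b}, and the closed-form expressions \eqref{eq:S_s_star}--\eqref{eq:S_lamb_update} are its unique KKT maximizer. Since the update \eqref{eq:S_s_update} with $0<\rho_s\leq 1$ is a convex combination of the feasible points $\bar{s}_n^{(l)}$ and $\bar{s}_n^*$, the iterate $\bar{s}_n^{(l+1)}$ remains feasible for \eqref{fixedPS}. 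Moreover, $\sum_n g(\bar{s}_n^{(l)})(\bar{s}_n^*-\bar{s}_n^{(l)})\geq 0$ because $\bar{s}_n^*$ maximizes the linear surrogate, so $\tilde{i}_{dc}(\bar{s}_n^{(l+1)},\bar{s}_n^{(l)}) \geq i_{dc}(\bar{s}_n^{(l)})$. Combining this with the underestimator property yields the monotone chain $i_{dc}(\bar{s}_n^{(l+1)}) \geq \tilde{i}_{dc}(\bar{s}_n^{(l+1)},\bar{s}_n^{(l)}) \geq i_{dc}(\bar{s}_n^{(l)})$, i.e., the objective is non-decreasing along iterations.

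The feasible set defined by \eqref{fixedPS_b} is compact and $i_{dc}$ is continuous on it, so the monotone sequence $\{i_{dc}(\bar{s}_n^{(l)})\}$ converges to some finite limit and $\{\bar{s}_n^{(l)}\}$ admits a convergent subsequence with limit point $\bar{s}_n^\infty$. Continuity of $g(\cdot)$ in \eqref{eq:derivativetaylor} and of the dual update \eqref{eq:S_lamb_update}, together with the fact that the surrogate improvement $\rho_s\sum_n g(\bar{s}_n^{(l)})(\bar{s}_n^*-\bar{s}_n^{(l)})$ must vanish in the limit (otherwise the objective could be strictly increased beyond its finite limit), force $\bar{s}_n^*-\bar{s}_n^{(l)}\to 0$ along the subsequence. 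Hence $\bar{s}_n^\infty$ is a fixed point of the update \eqref{eq:S_s_update}.

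At this fixed point, substituting $\bar{s}_n^\infty$ into \eqref{eq:S_s_star}--\eqref{eq:S_lamb_update} gives $\lambda^\infty \bar{s}_n^\infty = g(\bar{s}_n^\infty)$, primal feasibility with the budget active ($\tfrac{1}{2}\sum_n(\bar{s}_n^\infty)^2 = P_T$, automatically enforced by \eqref{eq:S_lamb_update}), and dual feasibility $\lambda^\infty \geq 0$. Recognizing $g(\bar{s}_n)=\partial i_{dc}/\partial \bar{s}_n$ and writing the Lagrangian of \eqref{fixedPS}, these are exactly its stationarity, primal feasibility, dual feasibility, and complementary slackness conditions, certifying $\bar{s}_n^\infty$ as a KKT (stationary) point of \eqref{fixedPS}. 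The main obstacle I expect is in the previous paragraph: rigorously showing that $\bar{s}_n^*-\bar{s}_n^{(l)} \to 0$ along the convergent subsequence, rather than only $\bar{s}_n^{(l+1)}-\bar{s}_n^{(l)}\to 0$. This is immediate when $\rho_s>0$ is fixed, but it does require that $\lambda^{(l)}$ in \eqref{eq:S_lamb_update} stays bounded away from zero unless the gradient $g(\bar{s}_n^{(l)})$ itself vanishes; handling this boundary case carefully via the smoothness of $g$ and the square-root in \eqref{eq:S_lamb_update} is where the proof will need most care.
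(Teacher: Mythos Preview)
Your proposal is correct and follows essentially the same SCA approach as the paper: convexity of $i_{dc}$ gives the global underestimator property, monotone ascent plus compactness yields convergence of the objective sequence, and the limit is identified as a KKT point of \eqref{fixedPS}. Your version is considerably more detailed than the paper's brief proof, which does not spell out the feasibility of the damped update, the subsequence/fixed-point argument, or the boundary issue for $\lambda^{(l)}$ that you rightly flag.
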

\begin{proof}
    The first-order Taylor approximation is a global under-estimator of a convex function, leading to $\Tilde{i}_{dc}(\bar{s}_n , \bar{s}_n^{(l)}) \leq i_{dc}(\bar{s}_n)$. By denoting $\bar{s}_n^*$ as the solution of problem \eqref{fixedPS_approx} at the lth iteration, we can write $\Tilde{i}_{dc}(\bar{s}_n^* , \bar{s}_n^{(l)}) \geq \Tilde{i}_{dc}(\bar{s}_n^{(l)}, \bar{s}_n^{(l)}) = i_{dc}(\bar{s}_n^{(l)})$. Combining this with the fact that $i_{dc}$ is convex and $\rho_s \leq 1$ in \eqref{eq:S_s_update}, we obtain that the sequence $\{i_{dc}(\bar{s}_n^{(l)})\}$ is monotonically increasing. Since $i_{dc}$ is continuous and bounded due to constraint \eqref{fixedPS_b}, $\lim_{l \rightarrow \infty} i_{dc}(\bar{s}_n^{(l)}) = i_{dc}^*$ exists. Therefore, the solution converges to a point that fulfills the KKT conditions of the original problem \eqref{fixedPS}, indicating that it is a stationary point of \eqref{fixedPS}.
\end{proof}

\begin{algorithm}[t]
	\caption{Iterative waveform optimization (IT-WF).} \label{alg:waveformsca}
	\begin{algorithmic}[1]
            \State \textbf{Input:} $\{h_n\}_{\forall n}$, $\rho_s$, $\upsilon$ \quad \textbf{Output:} $s_n^{(l)}$
             \State \textbf{Initialize:} Initialize $s_n^{(l)}, \forall n$ using \eqref{eq:SMF}, $\xi_1 =\infty$
            \Repeat
                \State \hspace{-2mm} $\xi_1^\star \leftarrow \xi_1$, compute $\lambda^{(l)}$ using \eqref{eq:S_lamb_update}
                \State \hspace{-2mm} Obtain $\bar{s}_n^{(l + 1)}, \forall n$ using \eqref{eq:S_s_star} and \eqref{eq:S_s_update}
                \State \hspace{-2mm} $s_n^{(l + 1)} \leftarrow \bar{s}_n^{(l + 1)}e^{-j\tilde{h}_n}, \forall n$ 
                \State \hspace{-2mm} Compute $\xi_1$ using \eqref{fixedPS_approx_a}, $s_n^{(l)} \leftarrow s_n^{(l + 1)}$, $l \leftarrow l + 1$
            \Until{$\norm{1 - {\xi_1^\star}/{\xi_1}}\leq \upsilon$}
                
\end{algorithmic} 
\end{algorithm}


\subsection{SDR-based Beamforming for Fully Connected BD-RIS}\label{sec:SDROPTIMIZATION}


Herein, we provide an optimization method for the beamforming problem given $s_n, \forall n$. Note that this approach only applies to fully connected BD-RIS since it directly deals with  $\mathbf{\Theta}$ as the optimization variable. Let us proceed by rewriting \eqref{graph_problem} for fixed $s_n$ as
\begin{align}
\label{scatter_probelm} \maximize_{\boldsymbol{\Theta}} \quad &  i_{dc} \\
\textrm{subject to} \quad & \eqref{graph_problem_c}, \eqref{graph_problem_d}, \nonumber
\end{align}
which is highly complicated and non-convex due to the unitary constraint and the $i_{dc}$ non-linearity.

\begin{proposition}\label{theorem:vecotorization2}
    The cascade channel, i.e., $h_n = \mathbf{h}_{R,n}^T\mathbf{\Theta}{\mathbf{h}_{I,n}}$, can be rewritten as 
    \begin{equation}\label{eq:hnreform2}
         h_n = \mathbf{a}_n^T \boldsymbol{\theta},
    \end{equation}
    where $\mathbf{a}_n =  \mathbf{P}^T\mathrm{Vec}(\mathbf{h}_{I,n}\mathbf{h}_{R,n}^T) \in \mathbb{C}^{M(M + 1)/2 \times 1}$. Moreover, $\boldsymbol{\theta} \in \mathbb{C}^{M(M + 1)/2 \times 1}$ is the vector containing the lower/upper-triangle elements in $\mathbf{\Theta}$ and $\mathbf{P} \in \{0, 1\}^{M^2\times M(M + 1)/2}$ is a permutation matrix such that
\begin{equation}\label{eq:permute_matrix} [\mathbf{P}]_{M(m - 1) + n, k} = 
    \begin{cases}
        1, & k = {m(m - 1)}/{2} + n,\  1 \leq n \leq m 
        \\
        1, & k = {n(n - 1)}/{2} + m,\  m < n \leq M \\
        0, & \text{otherwise}.
    \end{cases}
\end{equation} 
\end{proposition}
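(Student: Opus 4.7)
\begin{myproof}[Proof Plan]
The plan is to reduce the quadratic form $\mathbf{h}_{R,n}^T \mathbf{\Theta} \mathbf{h}_{I,n}$ to a bilinear form in the vectorized $\mathbf{\Theta}$, and then compress the latter from $M^2$ to $M(M+1)/2$ entries by exploiting the symmetry constraint \eqref{graph_problem_c}. The permutation matrix $\mathbf{P}$ is precisely the linear map that reinserts the redundant symmetric entries.

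First, I would apply the standard vectorization identity $\mathbf{u}^T \mathbf{X} \mathbf{v} = \mathrm{Vec}(\mathbf{v}\mathbf{u}^T)^T \mathrm{Vec}(\mathbf{X})$ to obtain $h_n = \mathrm{Vec}(\mathbf{h}_{I,n}\mathbf{h}_{R,n}^T)^T \mathrm{Vec}(\mathbf{\Theta})$, provided $\mathbf{\Theta}$ is symmetric. Concretely, writing the sum entrywise gives $h_n = \sum_{i,j} [\mathbf{h}_{R,n}]_i [\mathbf{\Theta}]_{i,j} [\mathbf{h}_{I,n}]_j$, and using $[\mathbf{\Theta}]_{i,j} = [\mathbf{\Theta}]_{j,i}$ from \eqref{graph_problem_c} one can relabel indices so that, under the column-major convention $[\mathrm{Vec}(\mathbf{A})]_{M(m-1)+n} = [\mathbf{A}]_{n,m}$, the inner product indeed reproduces $h_n$. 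This is a short index chase.

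Second, I would show that $\mathrm{Vec}(\mathbf{\Theta}) = \mathbf{P}\boldsymbol{\theta}$ by reading off the definition \eqref{eq:permute_matrix}. The column index $k$ of $\mathbf{P}$ enumerates the independent (say upper-triangular) entries of $\mathbf{\Theta}$, and each column of $\mathbf{P}$ contains exactly the ones needed to place $\boldsymbol{\theta}_k$ into both the upper-triangular slot $(n,m)$ with $n\le m$ via $k = m(m-1)/2 + n$ and, by symmetry, into the mirrored lower-triangular slot $(n,m)$ with $n>m$ via $k = n(n-1)/2 + m$. Checking the two cases of \eqref{eq:permute_matrix} against these two placements confirms $\mathrm{Vec}(\mathbf{\Theta}) = \mathbf{P}\boldsymbol{\theta}$ row-by-row.

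Finally, combining the two identities yields
\begin{equation*}
h_n = \mathrm{Vec}(\mathbf{h}_{I,n}\mathbf{h}_{R,n}^T)^T \mathbf{P}\,\boldsymbol{\theta} = \bigl(\mathbf{P}^T \mathrm{Vec}(\mathbf{h}_{I,n}\mathbf{h}_{R,n}^T)\bigr)^T \boldsymbol{\theta} = \mathbf{a}_n^T \boldsymbol{\theta},
\end{equation*}
as claimed. The only mildly delicate step is the second one: one must be careful with the column-major vectorization convention and with the fact that $\mathrm{Vec}(\mathbf{\Theta})$ lists each off-diagonal entry twice (once in its upper and once in its lower position), so $\mathbf{P}$ must have two nonzero entries per column for $k$ indexing a strictly off-diagonal entry, and a single one for $k$ indexing a diagonal entry. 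Verifying that \eqref{eq:permute_matrix} encodes exactly this behaviour is where the proof really lives; everything else is routine vectorization algebra.
\end{myproof}
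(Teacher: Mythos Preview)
Your proposal is correct and follows essentially the same route as the paper: the paper reaches $h_n=\mathrm{Vec}(\mathbf{h}_{I,n}\mathbf{h}_{R,n}^T)^T\mathrm{Vec}(\boldsymbol{\Theta})$ via the trace chain $h_n=\mathrm{Tr}(h_n)=\mathrm{Tr}(\mathbf{h}_{I,n}\mathbf{h}_{R,n}^T\boldsymbol{\Theta})=\mathrm{Vec}(\mathbf{h}_{I,n}\mathbf{h}_{R,n}^T)^T\mathrm{Vec}(\boldsymbol{\Theta})$ and then invokes $\mathrm{Vec}(\boldsymbol{\Theta})=\mathbf{P}\boldsymbol{\theta}$, which is exactly your two-step argument with the direct vectorization identity in place of the (equivalent) trace manipulations. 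If anything, your second step is more explicit than the paper's, which simply asserts the existence of $\mathbf{P}$ without verifying the index formula \eqref{eq:permute_matrix}.
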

\begin{myproof}
    The proof is provided in Appendix~\ref{appen3}
\end{myproof}

Leveraging Proposition~\ref{theorem:vecotorization2} allows us to remove the constraint \eqref{graph_problem_c} by using $\boldsymbol{\theta}$ as the optimization variable. However, the complexity caused by the unitary constraint still remains. For this, we leverage the idea in \cite{clerckxRISWPT} and define $\mathbf{z}_n = s_n\mathbf{a}_n$, $\mathbf{D}_0 = \mathbf{z}_1\mathbf{z}_1^H + \ldots + \mathbf{z}_N\mathbf{z}_N^H$, $\mathbf{D}_1 = \mathbf{z}_1\mathbf{z}_2^H + \ldots + \mathbf{z}_{N-1}\mathbf{z}_N^H$, and $\mathbf{D}_{N - 1} = \mathbf{z}_1\mathbf{z}_N^H$. Hereby, $i_{dc}$ can be reformulated as 
\begin{multline}\label{eq:idc_reform1}
    i_{dc} = \frac{1}{2} K_2 \boldsymbol{\theta}^H \mathbf{D}_0 \boldsymbol{\theta} + \frac{3}{8} K_4 \boldsymbol{\theta}^H \mathbf{D}_0 \boldsymbol{\theta} (\boldsymbol{\theta}^H \mathbf{D}_0 \boldsymbol{\theta})^H  \\ + \frac{3}{4} K_4 \sum_{n = 1}^{N - 1} \boldsymbol{\theta}^H \mathbf{D}_n \boldsymbol{\theta} (\boldsymbol{\theta}^H \mathbf{D}_n \boldsymbol{\theta})^H.
\end{multline}
Next, we need to formulate the constraint \eqref{graph_problem_d} as a function of the new optimization variable $\boldsymbol{\theta}$. For this, let us proceed by defining a permutation matrix $\mathbf{P}_i$, 
which extracts the $i$th row of $\boldsymbol{\Theta}$ from $\boldsymbol{\theta}$. Hereby, \eqref{graph_problem_d} can be rewritten as 
\begin{align}\label{eq:permutetheta}
    (\mathbf{P}_i \boldsymbol{\theta})^H(\mathbf{P}_j \boldsymbol{\theta}) = \mathrm{Tr}(\boldsymbol{\theta}\boldsymbol{\theta}^H \bar{\mathbf{P}}_{i, j}) =  \begin{cases}
        1, & i = j, 
        \\
        0, & i \neq j,
    \end{cases}
\end{align}

\noindent where $\bar{\mathbf{P}}_{i, j} = \mathbf{P}_i^T \mathbf{P}_j$ and $\mathbf{P}_i$ is a permutation matrix containing the $(iM - M + 1)$th to the $iM$th row of $\mathbf{P}$.

Now, we define $d_n = \boldsymbol{\theta}^H \mathbf{D}_n \boldsymbol{\theta}$, $\mathbf{d} = [ d_1, \ldots, d_N]^T$, and positive semidefinite matrices $\mathbf{K}_0 = \mathrm{diag}\{\frac{3}{8} K_4, \frac{3}{4} K_4, \ldots, \frac{3}{4} K_4\} \succeq 0$ and $\mathbf{X} = \boldsymbol{\theta} \boldsymbol{\theta}^H$. Hereby, the problem can be reformulated as 
\begin{subequations}\label{SDP_problem_gen}
\begin{align}
\label{SDP_problem_gen_a} \maximize_{\mathbf{d}, \mathbf{X} \succeq 0} \quad &  \frac{1}{2} K_2 d_0 +  \mathbf{d}^H \mathbf{K}_0 \mathbf{d} \\
\textrm{subject to} \label{SDP_problem_gen_c}  \quad & \mathrm{Tr}(\mathbf{X} \bar{\mathbf{P}}_{i, j}) = 1,\ \forall i = j, \\
\label{SDP_problem_gen_f}  \quad & \mathrm{Tr}(\mathbf{X} \bar{\mathbf{P}}_{i, j}) = 0,\ \forall i \neq j, \\
\label{SDP_problem_gen_d}  \quad & d_n = \mathrm{Tr}(\mathbf{X} \mathbf{D}_n),\ \forall n, \\
\label{SDP_problem_gen_e}  \quad & \mathrm{rank}(\mathbf{X}) = 1.
\end{align}
\end{subequations}

Note that problem \eqref{SDP_problem_gen} is still non-convex and challenging to solve since it deals with the maximization of a convex objective function and includes a rank-1 constraint. For \eqref{SDP_problem_gen_a}, SCA can be used to iteratively update the objective function by approximating it using its first-order Taylor expansion. Specifically, the quadratic term in \eqref{SDP_problem_gen_a} can be approximated in the neighborhood of $\mathbf{d}^{(l)}$ by \cite{clerckxRISWPT}
\begin{equation}\label{eq:taylor_sdp}
    f(\mathbf{d}, \mathbf{d}^{(l)}) = 2 \Re \bigl\{ {\mathbf{d}^{(l)}}^H \mathbf{K}_0 \mathbf{d} \bigr\} - {\mathbf{d}^{(l)}}^H \mathbf{K}_0 {\mathbf{d}^{(l)}}. 
\end{equation}

Since $f(\mathbf{d}, \mathbf{d}^{(l)}) \leq \mathbf{d}^H \mathbf{K}_0 \mathbf{d}$ always holds, it can be used as a lower bound, and maximizing Taylor's approximation iteratively leads to maximizing the original quadratic term. Note that \eqref{SDP_problem_gen_e} can be equivalently written as 
\begin{equation}\label{eq:rank1atomic}
    \norm{\mathbf{X}}_\star - \norm{\mathbf{X}}_F \leq 0,
\end{equation}
which is non-convex since it deals with the difference of two convex functions. However, the right-hand-side term can be approximated using its first-order Taylor expansion at a local point $\bar{\mathbf{X}}$. Specifically, $\norm{\mathbf{X}}_F$ can be replaced by a convex lower-bound given by \cite{shabir2024electromagneticallyconsistentoptimizationalgorithms}
\begin{equation}\label{eq:apporx_X}
    f(\mathbf{X}, \mathbf{\bar{X}}) = \norm{\bar{\mathbf{X}}}_F + \Re\biggl\{\sum_{i,l} \frac{[\mathbf{\bar{X}}]_{i,l}}{\norm{\bar{\mathbf{X}}}_F}\bigl([\mathbf{X}]_{i,l} - [\mathbf{\bar{X}}]_{i,l}\bigr)^\star \biggr\}.
\end{equation}
Hereby, \eqref{SDP_problem_gen} can be reformulated as 
\begin{subequations}\label{SDP_problem_rank}
\begin{align}
\label{SDP_problem_rank_a} \minimize_{\mathbf{X}} \quad & \Omega = \mathrm{Tr}(\mathbf{K}_1\mathbf{X}) \\
\textrm{subject to} \label{SDP_problem_rank_c}  \quad  &\norm{\mathbf{X}}_\star -  f(\mathbf{X}, \mathbf{\bar{X}}) \leq 0, \\
\label{SDP_problem_rank_e}  \quad & \mathbf{X} \succeq 0, \\
\quad & \eqref{SDP_problem_gen_c}, \eqref{SDP_problem_gen_f}, \nonumber
\end{align}
\end{subequations}
where $\mathbf{K}_1 = \mathbf{J} + \mathbf{J}^H$ and
\begin{equation}
    \mathbf{J} = -\frac{K_2}{4} \mathbf{D}_0 - \frac{3K_4}{8} d_0^{(l)} \mathbf{D}_0 - \frac{3 K_4}{4} \sum_{n = 1}^{N - 1} {d_n^{(l)}} \mathbf{D}_n.
\end{equation}
This problem is a standard SDP that can be solved using convex optimization tools such as CVX \cite{cvxref}. However, this approach requires many iterations for convergence since $\mathbf{X}$ is a large matrix and the approximation is done in both the objective function and a constraint. To cope with this complexity, we use SDR to reformulate the problem in the neighborhood of the initial point $\mathbf{d}^{(l)}$ as
\begin{subequations}\label{SDP_problem}
\begin{align}
\label{SDP_problem_a} \minimize_{\mathbf{X}} \quad & \Omega = \mathrm{Tr}(\mathbf{K}_1\mathbf{X}) \\
\textrm{subject to} \quad &  \eqref{SDP_problem_gen_c}, \eqref{SDP_problem_gen_f}, \eqref{SDP_problem_rank_e}, \nonumber
\end{align}
\end{subequations}
which is also a standard SDP similar to \eqref{SDP_problem_rank}. Moreover, if the obtained $\mathbf{X}^*$ is a rank-1 matrix, the SDR is tight and $\mathbf{X}^*$ is a stationary point of the original convex subproblem including rank-1 constraint, leading to the solution extracted by $\mathbf{X}^* = \boldsymbol{\theta}^* {\boldsymbol{\theta}^*}^H$. Then, $\boldsymbol{\theta}^*$ can be used to obtain $\mathbf{\Theta}^* = \mathrm{Vec}^{-1}(\mathbf{P}\boldsymbol{\theta}^*)$. However, it might happen that $\mathrm{rank}(\mathbf{X}^*) > 1$, which is the case that $\mathbf{X}^*$ is a stationary point of problem \eqref{SDP_problem} (see \cite{clerckxRISWPT} for the proof). For this, we obtain an approximate $\boldsymbol{\theta}^*$ using the Gaussian randomization method in \cite{guss_randomization}. Note that in this case, using $\boldsymbol{\theta}^*$ to construct $\mathbf{\Theta}$ leads to a symmetric $\mathbf{\Theta}$, but not necessarily a unitary matrix. Thus, it is essential to map the final solution into the feasible space of the problem. Let us proceed by writing $\mathbf{\Theta}' = \mathrm{Vec}^{-1}(\mathbf{P}\boldsymbol{\theta}^*)$. Then, by leveraging the fact that $\mathbf{\Theta}'$ is symmetric, we can write the singular value decomposition (SVD) as $\mathbf{\Theta}' = \mathbf{Q}\boldsymbol{\Sigma}\mathbf{Q}^T$, where $\mathbf{Q}$ is a unitary matrix and $\boldsymbol{\Sigma}$ is a diagonal matrix containing the singular values of $\mathbf{\Theta}'$. It is evident that if the diagonal elements of $\boldsymbol{\Sigma}$ are unit modulus,  $\mathbf{\Theta}'$ is unitary. However, this is only the case when $\mathrm{rank}(\mathbf{X}^*) = 1$ and SDR is tight, while for higher-rank cases, we propose a randomization-based method to obtain a feasible solution.

\begin{algorithm}[t]
	\caption{SDR-based beamforming and waveform optimization for fully connected BD-RIS (SDR-BDRIS).} \label{alg:SDPRIS}
	\begin{algorithmic}[1]
            \State \textbf{Input:} $\upsilon$, $\beta$, $\mathbf{h}_{R,n}, \mathbf{h}_{I,n}, \forall n$ \quad \textbf{Output:} $s_n^{(l)}, \forall n$, $\mathbf{\Theta}^*$
            \State \textbf{Initialize:} 
            \State $f^\star =0$, $i_{dc} = \infty$ \label{alg1:line:init_start}, $[\boldsymbol{Z}_{i,m}] = jZ_0, \forall i,m$, 
            \State $\mathbf{\Theta}^{(l)} = (\mathbf{Z} + Z_0\mathbf{I}_M)^{-1}(\mathbf{Z} - Z_0\mathbf{I}_M)$ 
            \State Compute $\mathbf{P}$, $h_n$, and $s_n^{(l)}$ using \eqref{eq:permute_matrix}, \eqref{eq:hnreform2}, and \eqref{eq:SMF}
            \State Compute $d_n^{(l)} = \boldsymbol{\theta}^{(l)}\mathbf{D}_n{\boldsymbol{\theta}^{(l)}}^H$, where $\boldsymbol{\theta}^{(l)} = \mathbf{P}^{-1}\mathrm{Vec}(\mathbf{\Theta}^{(l)})$
            \Repeat\label{algBDRIS:line:alter_start}
                \State \hspace{-2mm} $\Omega  =\infty$, $i_{dc}^\star \leftarrow i_{dc}$
                \Repeat\label{algBDRIS:passstart}
                    \State \hspace{-2mm} $\Omega ^\star \leftarrow \Omega $, solve \eqref{SDP_problem} to obtain $\mathbf{X}$
                    \State \hspace{-2mm} Compute $\boldsymbol{\theta}^{(l)}$ using Gaussian randomization
                    \State \hspace{-2mm} $d_n^{(l)} = \boldsymbol{\theta}^{(l)}\mathbf{D}_n{\boldsymbol{\theta}^{(l)}}^H$, $\mathbf{\Theta}^{(l)} = \mathrm{Vec}^{-1}(\mathbf{P}\boldsymbol{\theta}^{(l)})$
                    \State \hspace{-2mm} $l \leftarrow l + 1$
                \Until{$\norm{1 - {\Omega ^\star}/{\Omega }}\leq \upsilon$}\label{algBDRIS:passend}
                \State \hspace{-2mm} $h_n = \mathbf{h}_{R,n}^T\mathbf{\Theta}^{(l)}{\mathbf{h}_{I,n}}$ and run Algorithm~\ref{alg:waveformsca} to update $s_n^{(l)}$
                \State \hspace{-2mm} Compute $i_{dc}$ using \eqref{eq:parseval}
            \Until{$\norm{1 - i_{dc}^\star/ i_{dc}}\leq \upsilon$}\label{algBDRIS:line:alter_end}
            \State Run Algorithm~\ref{alg:PS_obtain} to obtain $\mathbf{\Theta}^*$
\end{algorithmic} 
\end{algorithm}

\begin{algorithm}[t]
	\caption{Randomization-based method for obtaining a feasible $\mathbf{\Theta}$.} \label{alg:PS_obtain}
	\begin{algorithmic}[1]
            \State \textbf{Input:} $\mathbf{\Theta}$, $K$, $\mathbf{h}_{R,n}, \mathbf{h}_{I,n}, s_n, \forall n$ \quad \textbf{Output:} $\mathbf{\Theta}^*$
             \State \textbf{Initialize:} Compute SVD of $\mathbf{\Theta}$ as $\mathbf{\Theta} = \mathbf{Q}\boldsymbol{\Sigma}\mathbf{Q}^T$, $i^\star_{dc} = 0$
            \For{$k = 1, \ldots, K$}
                \State \hspace{-2mm} Generate random $\phi_i \in [0, 2\pi], \forall i$ 
                \State \hspace{-2mm} Set $\boldsymbol{\phi} = [e^{j\phi_1}, \ldots, e^{j\phi_M}]^T$ and $\boldsymbol{\Sigma}' = \mathrm{diag}(\boldsymbol{\phi})$
                \State \hspace{-2mm} Compute $\mathbf{\Theta} = \mathbf{Q}\boldsymbol{\Sigma}'\mathbf{Q}^T$ and $i_{dc}$ using \eqref{eq:parseval}
                \State \textbf{if } $i_{dc} > i^\star_{dc}$, \textbf{ then } $i^\star_{dc} \leftarrow i_{dc}$, $\mathbf{\Theta}^* \leftarrow \mathbf{\Theta}$
            \EndFor
\end{algorithmic} 
\end{algorithm}

Algorithm~\ref{alg:SDPRIS} describes the proposed SDR-based method for beamforming and waveform optimization for fully connected BD-RIS. First, the optimization variables are initialized. Then, the waveform and scattering matrix are optimized in an alternative fashion through lines \ref{algBDRIS:line:alter_start}-\ref{algBDRIS:line:alter_end}. Specifically, beamforming is done by solving \eqref{SDP_problem} iteratively in lines \ref{algBDRIS:passstart}-\ref{algBDRIS:passend}, followed by iterative waveform optimization using Algorithm~\ref{alg:waveformsca}. Finally, Algorithm~\ref{alg:PS_obtain} is utilized to construct a feasible solution $\mathbf{\Theta}^*$ based on the characteristics of the obtained solution $\mathbf{\Theta}$. First, the SVD of $\mathbf{\Theta}$ is computed to obtain a unitary matrix $\mathbf{Q}$. Then, random phase shifts are generated for $K$ iterations to construct new $\boldsymbol{\Sigma}'$ matrices and their corresponding feasible solution $\mathbf{\Theta}$. Finally, the constructed $\mathbf{\Theta}$ with the best $i_{dc}$ is selected. Note that the SDP-BDRIS algorithm follows the same procedure as Algorithm~\ref{alg:SDPRIS}, but solves \eqref{SDP_problem_rank} instead of \eqref{SDP_problem}.

{The problems in \eqref{SDP_problem_rank} and \eqref{SDP_problem} are bounded due to the constraints in \eqref{SDP_problem_gen_c}, \eqref{SDP_problem_gen_f}, and \eqref{SDP_problem_rank_e}, along with the rank surrogate constraint \eqref{SDP_problem_rank_c} in the SDP-based formulation. The proposed SCA-based beamforming algorithms, based on SDP and SDR, iteratively solve convex approximations of the original non-convex problem \eqref{SDP_problem_gen}. At each iteration, both the objective and the non-convex surrogate rank constraint (in the SDP-based case) are approximated using first-order Taylor expansions that satisfy the standard SCA assumptions \cite{scutari2016paralleldistributedmethodsnonconvex, DC_constraint_SCA}. As in Theorem~\ref{theo_conv_proof_WF}, the SDP-based beamforming converges to a stationary point of the original beamforming problem \eqref{SDP_problem_gen}. For the SDR-based case, convergence to a stationary point of \eqref{SDP_problem_gen} holds only when the obtained solution of the relaxed problem is rank-1. Therefore, since Algorithm~\ref{alg:waveformsca} converges to a stationary point of the waveform subproblem \eqref{fixedPS}, the full SDP-BDRIS algorithm yields a stationary point of the joint problem \eqref{graph_problem}. On the other hand, SDR-BDRIS converges to a stationary point of \eqref{graph_problem} only when the solution to \eqref{SDP_problem} is rank-1. Otherwise, if the relaxed solution is of a higher rank, the rank-1 solution obtained via Algorithm~\ref{alg:PS_obtain} is a feasible approximate solution, and the resulting point is not guaranteed to satisfy the stationary conditions of the original problem \eqref{graph_problem}.}

\subsection{SCA-based Beamforming for Fully Connected BD-RIS}\label{sec:SCAOPTIMIZATION}

Here, we provide an optimization framework relying only on SCA by directly approximating \eqref{scatter_probelm}. Specifically, we directly deal with the matrix $\boldsymbol{\Theta}$ without lifting the variable size. Let us proceed by writing the first-order Taylor approximation of $i_{dc}$ w.r.t. $\boldsymbol{\Theta}$ as 
\begin{multline}
    \tilde{i}_{dc}(\boldsymbol{\Theta}, \boldsymbol{\Theta}^{(l)}) =  i_{dc}(\boldsymbol{\Theta}^{(l)}) \\ + 2\,\Re\!\left\{\operatorname{Tr}\!\left(\big(\nabla_{\boldsymbol{\Theta}} i_{dc}(\boldsymbol{\Theta}^{(l)})\big)^{\!H}\,(\boldsymbol{\Theta}-\boldsymbol{\Theta}^{(l)})\right)\right\}.
\end{multline}
Moreover, the unitary constraint can be divided into two separate constraints formulated as $\boldsymbol{\Theta}^H \boldsymbol{\Theta} \preceq  \mathbf{I}_M$ and $\boldsymbol{\Theta}^H \boldsymbol{\Theta} \succeq  \mathbf{I}_M$. By applying the Schur complement \cite{boyd2004convex}, the former convex constraint can be equivalently expressed as 
\begin{equation}\label{eq:schur}
    \begin{bmatrix}
        \mathbf{I}_M & \boldsymbol{\Theta}^H \\
        \boldsymbol{\Theta} & \mathbf{I}_M
    \end{bmatrix} \succeq 0,
\end{equation}
while the latter constraint is nonconvex. To enable tractable optimization, we approximate it via the
first-order Taylor expansion around the local point $\boldsymbol{\Theta}^{(l)}$, which leads to the
affine surrogate written as \cite{MM_survey}
\begin{equation}
    g({\boldsymbol{\Theta}^{(l)}}, {\boldsymbol{\Theta}}) = {\boldsymbol{\Theta}^{(l)}}^H\boldsymbol{\Theta}
    + \boldsymbol{\Theta}^H \boldsymbol{\Theta}^{(l)} - {\boldsymbol{\Theta}^{(l)}}^H\boldsymbol{\Theta}^{(l)}.
\end{equation}
Since $(\boldsymbol{\Theta}-\boldsymbol{\Theta}^{(l)})^H(\boldsymbol{\Theta}-\boldsymbol{\Theta}^{(l)}) 
\succeq 0$, this affine constraint provides a global inner approximation of 
the original constraint. Hence, the two constraints jointly enforce an approximate unitary constraint 
within the SCA framework. Note that enforcing these two constraints on the problem may lead to an empty 
feasible space, thus solver failure. To address this, we introduce 
a positive semidefinite auxiliary variable $\mathbf{S}$ in the affine surrogate constraint, 
while penalizing its trace in the objective function. This avoids solver failures by relaxing the unitary constraint to some extent, while the penalty pushes the auxiliary term towards zero, thereby encouraging meeting the constraint throughout the SCA iterations. Moreover, to avoid divergence when the local point 
$\boldsymbol{\Theta}^{(l)}$ is far from the unitary set, we introduce a trust region that restricts the update $\boldsymbol{\Theta}-\boldsymbol{\Theta}^{(l)}$ within a bounded neighborhood. 
These two strategies are standard in SCA to guarantee convergence \cite{scutari2016paralleldistributedmethodsnonconvex,boyd2004convex}. Hereby, the local convex problem in the neighborhood of $\boldsymbol{\Theta}^{(l)}$ can be written as 
\begin{subequations}\label{scatter_probelm_SCA}
\begin{align}
\label{scatter_probelm_SCA_a} \maximize_{\boldsymbol{\Theta}, \mathbf{S}\succeq 0} \quad & \Omega_{\textrm{sca}} = \tilde{i}_{dc}(\boldsymbol{\Theta}, \boldsymbol{\Theta}^{(l)}) - \sigma\mathrm{Tr}(\mathbf{S}) \\
\textrm{subject to} \quad & g({\boldsymbol{\Theta}^{(l)}}, {\boldsymbol{\Theta}}) + \mathbf{S} \succeq \mathbf{I}_M,\\
\quad & \norm{\boldsymbol{\Theta} - \boldsymbol{\Theta}^{(l)}}_F \leq \iota, \\
\quad & \eqref{graph_problem_c}, \eqref{eq:schur}.\nonumber
\end{align}
\end{subequations}

Algorithm~\ref{alg:scaSDPRIS} describes the proposed SCA-based method for beamforming and waveform optimization for fully connected BD-RIS. The overall procedure is similar to Algorithm~\ref{alg:SDPRIS}, while beamforming is optimized by iteratively solving \eqref{scatter_probelm_SCA}. Moreover, we use an adaptive penalty coefficient~$\sigma$, which is initialized with a small value to preserve feasibility in the early iterations and then gradually increased to enforce the approximate unitary constraint more strictly as the algorithm converges. Furthermore, to avoid infeasible solutions caused by relaxations, we map the final solution using Algorithm~\ref{alg:PS_obtain}. The proposed SCA-BDRIS algorithm includes a
non-decreasing sequence of the objective values since, in each iteration, the convex surrogate is solved optimally and is tight at the local point. Moreover, every obtained local solution satisfies the KKT conditions of the surrogate problem under standard assumptions \cite{scutari2016paralleldistributedmethodsnonconvex}. In case the obtained solution before mapping is unitary, the final solution is a stationary point of the original problem; otherwise, it is an approximate feasible solution after mapping.

\begin{algorithm}[t]
	\caption{SCA-based beamforming and waveform optimization for fully connected BD-RIS (SCA-BDRIS).} \label{alg:scaSDPRIS}
	\begin{algorithmic}[1]
            \State \textbf{Input:} $\upsilon$, $\sigma$, $\iota$, $\mathbf{h}_{R,n}, \mathbf{h}_{I,n}, \forall n$ \quad \textbf{Output:} $s_n^{(l)}, \forall n$, $\mathbf{\Theta}^*$
            \State \textbf{Initialize:} as in Algorithm~\ref{alg:SDPRIS}
            \Repeat\label{algscaBDRIS:line:alter_start}
                \State \hspace{-2mm} $\Omega_{\textrm{sca}}  =\infty$, $i_{dc}^\star \leftarrow i_{dc}$
                \Repeat\label{algscaBDRIS:passstart}
                    \State \hspace{-2mm} $\Omega_{\textrm{sca}} ^\star \leftarrow \Omega_{\textrm{sca}} $, solve \eqref{scatter_probelm_SCA} to obtain $\boldsymbol{\Theta}$
                    \State \hspace{-2mm} $l \leftarrow l + 1$, $\mathbf{\Theta}^{(l)} \leftarrow \boldsymbol{\Theta}$, $\sigma \leftarrow \min\{1, 1.5 \sigma \}$
                \Until{$\norm{1 - {\Omega_{\textrm{sca}} ^\star}/{\Omega_{\textrm{sca}} }}\leq \upsilon$}\label{algscaBDRIS:passend}
                \State \hspace{-2mm} $h_n = \mathbf{h}_{R,n}^T\mathbf{\Theta}^{(l)}{\mathbf{h}_{I,n}}$ and run Algorithm~\ref{alg:waveformsca} to update $s_n^{(l)}$
                \State \hspace{-2mm} Compute $i_{dc}$ using \eqref{eq:parseval}
            \Until{$\norm{1 - i_{dc}^\star/ i_{dc}}\leq \upsilon$}\label{algscaBDRIS:line:alter_end}
            \State Run Algorithm~\ref{alg:PS_obtain} to obtain $\mathbf{\Theta}^*$
\end{algorithmic} 
\end{algorithm}

\subsection{Iterative Beamforming Optimization}\label{sec:WFOPTIMIZATION}

Herein, we provide an alternative low-complexity optimization method for the beamforming problem, which applies to any BD-RIS structure since it deals with the $\mathbf{Z}$ matrix as the optimization variable. Let us proceed by defining $\mathcal{G}_\mathcal{Z}$ as the set of possible impedance matrices, which relies on the BD-RIS structure and inter-connections. Then, we rewrite \eqref{scatter_probelm} as
\begin{subequations}\label{fixedW}
\begin{align}
\label{fixedW_a} \maximize_{\mathbf{Z}} \quad &  i_{dc} \\
\textrm{subject to} \label{fixedWb} \quad &  \mathbf{Z} = \mathbf{Z}^T, \\
\label{fixedWc} \quad &  \mathbf{Z} \in \mathcal{G}_\mathcal{Z},
\end{align}
\end{subequations}
which is a highly non-linear non-convex problem due to the matrix inverse term in \eqref{eq:scattering_matrix}. Thus, the first challenge is to deal with this term, which appears in $h_n$. 

\begin{proposition}\label{theorem:2}
By leveraging the Neumann approximation and defining $\mathbf{\Omega}$ as a small increment to the impedance matrix $\mathbf{Z}$, such that $\norm{[\mathbf{\Omega}]_{i,m}} \leq \delta, \forall i,m$, we can write 
\begin{multline}\label{eq:neuman_impedance}
    (\mathbf{Z} + Z_0\mathbf{I}_M + \mathbf{\Omega})^{-1} \approx (\mathbf{Z} + Z_0\mathbf{I}_M)^{-1} - \\ (\mathbf{Z} + Z_0\mathbf{I}_M)^{-1}\mathbf{\Omega}(\mathbf{Z} + Z_0\mathbf{I}_M)^{-1},
\end{multline}
where $\delta$ must satisfy
\begin{equation}\label{eq:deltacondition}
    \delta \ll \frac{1}{\norm{(\mathbf{Z}^{(l)} + Z_0\mathbf{I}_M)^{-1}}_\infty}.
\end{equation}
\end{proposition}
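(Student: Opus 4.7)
The plan is to isolate the invertible ``base'' matrix $\mathbf{A} \triangleq \mathbf{Z} + Z_0\mathbf{I}_M$ and reduce the statement to the classical Neumann series expansion for $(\mathbf{I}_M + \mathbf{B})^{-1}$. Specifically, I would factor $\mathbf{A}+\mathbf{\Omega} = \mathbf{A}(\mathbf{I}_M + \mathbf{A}^{-1}\mathbf{\Omega})$ so that, provided the right inverse exists,
\begin{equation}
(\mathbf{A}+\mathbf{\Omega})^{-1} = (\mathbf{I}_M + \mathbf{A}^{-1}\mathbf{\Omega})^{-1}\mathbf{A}^{-1}.
\end{equation}
The approximation in the statement then corresponds to truncating the Neumann series $(\mathbf{I}_M + \mathbf{A}^{-1}\mathbf{\Omega})^{-1} = \sum_{k=0}^{\infty}(-\mathbf{A}^{-1}\mathbf{\Omega})^{k}$ to first order.

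Next I would show that the hypothesis \eqref{eq:deltacondition} guarantees convergence of this series. Since $\|[\mathbf{\Omega}]_{i,m}\|\le \delta$ for every entry, I can bound the induced infinity-norm by $\|\mathbf{\Omega}\|_\infty \le M\delta$ (using the standard row-sum characterization), and then invoke submultiplicativity to obtain $\|\mathbf{A}^{-1}\mathbf{\Omega}\|_\infty \le \|\mathbf{A}^{-1}\|_\infty \|\mathbf{\Omega}\|_\infty$. The condition $\delta \ll 1/\|\mathbf{A}^{-1}\|_\infty$ therefore yields $\|\mathbf{A}^{-1}\mathbf{\Omega}\|_\infty < 1$, which is precisely the sufficient condition for the Neumann series to converge absolutely in the induced norm.

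Having established convergence, I would write the expansion explicitly as
\begin{equation}
(\mathbf{I}_M + \mathbf{A}^{-1}\mathbf{\Omega})^{-1} = \mathbf{I}_M - \mathbf{A}^{-1}\mathbf{\Omega} + \sum_{k\ge 2}(-\mathbf{A}^{-1}\mathbf{\Omega})^{k},
\end{equation}
right-multiply by $\mathbf{A}^{-1}$, and identify the first two terms as the approximation claimed in \eqref{eq:neuman_impedance}. The remainder is bounded by a geometric tail of order $(\|\mathbf{A}^{-1}\|_\infty\,\|\mathbf{\Omega}\|_\infty)^{2}/(1-\|\mathbf{A}^{-1}\|_\infty\,\|\mathbf{\Omega}\|_\infty)$ times $\|\mathbf{A}^{-1}\|_\infty$, which is $O(\delta^{2})$ under the stated smallness assumption, justifying the ``$\approx$'' notation.

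The main obstacle I foresee is purely notational rather than technical: being careful about the fact that the hypothesis uses the entrywise bound on $\mathbf{\Omega}$ at a local impedance $\mathbf{Z}^{(l)}$, while the expansion point in the statement is the generic $\mathbf{Z}$. I would therefore make explicit that the proposition is applied iteratively around the current iterate $\mathbf{Z}^{(l)}$, so that $\mathbf{A} = \mathbf{Z}^{(l)} + Z_0\mathbf{I}_M$ matches the matrix whose inverse norm appears in \eqref{eq:deltacondition}, ensuring consistency between the smallness condition and the base point of the Neumann expansion used later in the SCA procedure.
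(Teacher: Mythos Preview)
Your proposal is correct and follows essentially the same route as the paper: both arguments reduce to the Neumann series expansion of $(\mathbf{Z}+Z_0\mathbf{I}_M+\mathbf{\Omega})^{-1}$ around the invertible base $\mathbf{Z}+Z_0\mathbf{I}_M$ and truncate at first order. Your version is in fact more self-contained than the paper's, which simply states the general Neumann identity and defers the accuracy claim under \eqref{eq:deltacondition} to external references, whereas you explicitly bound the geometric tail and handle the $\mathbf{Z}$ versus $\mathbf{Z}^{(l)}$ identification.
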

\begin{myproof}
    The proof is provided in Appendix~\ref{appen2}.
\end{myproof}
By leveraging Proposition~\ref{theorem:2}, one can linearize $h_n$ using the Neumann approximation and iteratively improve the approximation. Thus, by defining an auxiliary variable $\mathbf{\Omega}$ as the increment to the impedance matrix $\mathbf{Z}^{(l)}$ at the $l$th iteration of the Neumann approximation, we can write
\begin{multline}\label{eq:hnapprox}
    h_n^{(l)} \approx \mathbf{h}_{R,n}^T(\mathbf{Z}^{(l)} + Z_0\mathbf{I}_M + \boldsymbol{\Omega})^{-1}(\mathbf{Z}^{(l)} - Z_0\mathbf{I}_M){\mathbf{h}_{I,n}} \\ 
    \approx \mathbf{h}_{R,n}^T\biggl[(\mathbf{Z} + Z_0\mathbf{I}_M)^{-1} -  \\(\mathbf{Z} + Z_0\mathbf{I}_M)^{-1}\mathbf{\Omega}(\mathbf{Z} + Z_0\mathbf{I}_M)^{-1}\biggr](\mathbf{Z}^{(l)} - Z_0\mathbf{I}_M){\mathbf{h}_{I,n}} \\
    = \mathbf{h}_{R,n}^T\mathbf{A}^{(l)}( \mathbf{I}_M-  
    \mathbf{\Omega}\mathbf{A}^{(l)})\mathbf{B}^{(l)}{\mathbf{h}_{I,n}} \\
    =  {\mathbf{a}_n^{(l)}}^T(\mathbf{I}_M - \mathbf{\Omega}\mathbf{A}^{(l)})\mathbf{b}_n^{(l)}, \quad \forall n,
\end{multline}
where $\mathbf{A}^{(l)} = (\mathbf{Z}^{(l)} + Z_0\mathbf{I}_M)^{-1}$, $\mathbf{a}_n^{(l)} = ({\mathbf{h}_n^R}^T\mathbf{A}^{(l)})^T$, $\mathbf{b}_n^{(l)} = \mathbf{B}^{(l)}{\mathbf{h}_{I, n}}$, and $\mathbf{B}^{(l)} =  \mathbf{Z}^{(l)} - Z_0\mathbf{I}_M$. 

By leveraging the linearization in \eqref{eq:hnapprox}, it can be easily verified that the objective in \eqref{fixedW} can be approximated using Taylor's expansion. This allows for the problem to be solved using convex optimization tools, similar to the previous optimization case. As mentioned earlier, one needs to solve a quadratic program at each iteration for that purpose. This leads to the problem size scaling with the symmetric matrix subspace with size $M^2$, making it computationally inefficient for large problems. To cope with this, we proceed by proposing an efficient iterative method to solve \eqref{fixedW}.

\begin{proposition}\label{theorem:vecotorization}
    Expression \eqref{eq:hnapprox} can be rewritten as 
    \begin{equation}\label{eq:hnreform}
         h_n = {\mathbf{a}_n^{(l)}}^T\mathbf{b}_n^{(l)} - {\mathbf{f}_n^{(l)}}^T\boldsymbol{\omega},
    \end{equation}
    where $\mathbf{f}_n^{(l)} =  \mathbf{P}^T \mathrm{Vec}(\mathbf{A}^{(l)}\mathbf{b}_n^{(l)}{\mathbf{a}_n^{(l)}}^T) \in \mathbb{C}^{M(M + 1)/2 \times 1}$. Moreover, $\boldsymbol{\omega} \in \mathbb{C}^{M(M + 1)/2 \times 1}$ is the vector containing the lower/upper-triangle elements in $\mathbf{\Omega}$ and $\mathbf{P} \in \{0, 1\}^{M^2\times M(M + 1)/2}$ is the permutation matrix defined in \eqref{eq:permute_matrix}.
\end{proposition}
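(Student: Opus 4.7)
\begin{myproof}[Proof proposal]
The plan is to start from the Neumann linearization in \eqref{eq:hnapprox}, distribute the multiplication to separate a constant term from a single bilinear form in $\boldsymbol{\Omega}$, and then express that bilinear form as an inner product between a coefficient vector and the half-vectorization $\boldsymbol{\omega}$. Writing $\mathbf{u}:=\mathbf{a}_n^{(l)}$ and $\mathbf{v}:=\mathbf{A}^{(l)}\mathbf{b}_n^{(l)}$, the expansion of \eqref{eq:hnapprox} immediately gives
\begin{equation}
    h_n \;=\; \mathbf{u}^T\mathbf{b}_n^{(l)} \;-\; \mathbf{u}^T\boldsymbol{\Omega}\mathbf{v},
\end{equation}
so the whole task reduces to rewriting the scalar $\mathbf{u}^T\boldsymbol{\Omega}\mathbf{v}$ as ${\mathbf{f}_n^{(l)}}^{T}\boldsymbol{\omega}$ for the claimed $\mathbf{f}_n^{(l)}$.

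The key steps I would carry out are, in order: (i) use the cyclic property of the trace together with the identity $\mathrm{Tr}(\mathbf{X}^T\mathbf{Y})=\mathrm{Vec}(\mathbf{X})^T\mathrm{Vec}(\mathbf{Y})$ to obtain $\mathbf{u}^T\boldsymbol{\Omega}\mathbf{v}=\mathrm{Vec}(\mathbf{u}\mathbf{v}^T)^T\mathrm{Vec}(\boldsymbol{\Omega})$; (ii) invoke the symmetry of $\boldsymbol{\Omega}$, which is inherited from constraint \eqref{fixedWb} on $\mathbf{Z}$, so that equivalently $\mathbf{u}^T\boldsymbol{\Omega}\mathbf{v}=\mathbf{v}^T\boldsymbol{\Omega}\mathbf{u}=\mathrm{Vec}(\mathbf{v}\mathbf{u}^T)^T\mathrm{Vec}(\boldsymbol{\Omega})$, thereby matching the ordering of the factors inside $\mathbf{f}_n^{(l)}$; and (iii) exploit that the permutation matrix $\mathbf{P}$ introduced in \eqref{eq:permute_matrix} was constructed precisely so that $\mathbf{P}\boldsymbol{\omega}=\mathrm{Vec}(\boldsymbol{\Omega})$ whenever $\boldsymbol{\Omega}$ is symmetric and $\boldsymbol{\omega}$ stores its lower/upper-triangle entries. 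Substituting this relation gives $\mathbf{u}^T\boldsymbol{\Omega}\mathbf{v}=\mathrm{Vec}(\mathbf{v}\mathbf{u}^T)^T\mathbf{P}\boldsymbol{\omega}$, and reading off $\mathbf{f}_n^{(l)}=\mathbf{P}^T\mathrm{Vec}(\mathbf{v}\mathbf{u}^T)=\mathbf{P}^T\mathrm{Vec}(\mathbf{A}^{(l)}\mathbf{b}_n^{(l)}{\mathbf{a}_n^{(l)}}^T)$ recovers exactly \eqref{eq:hnreform}.

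I expect the only subtle point, and thus the main obstacle, to be matching the ordering $\mathbf{v}\mathbf{u}^T$ that appears inside the target $\mathbf{f}_n^{(l)}$ with the order $\mathbf{u}\mathbf{v}^T$ that naturally arises from the direct application of the $\mathrm{Tr}(\mathbf{X}^T\mathbf{Y})$--$\mathrm{Vec}$ identity to $\mathbf{u}^T\boldsymbol{\Omega}\mathbf{v}$. Handling this requires explicitly invoking $\boldsymbol{\Omega}=\boldsymbol{\Omega}^T$, which is why the statement implicitly relies on the symmetry constraint \eqref{fixedWb}; without it, the mapping from $\boldsymbol{\omega}$ to $\mathrm{Vec}(\boldsymbol{\Omega})$ through $\mathbf{P}$ would not be well-defined. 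Once this symmetry is used, the rest is a direct application of standard vectorization identities and no additional approximation is introduced beyond the Neumann linearization already embedded in \eqref{eq:hnapprox}.
\end{myproof}
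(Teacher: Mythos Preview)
Your proposal is correct and follows essentially the same route as the paper: expand \eqref{eq:hnapprox}, turn the bilinear term into a trace, apply the identity $\mathrm{Tr}(\mathbf{X}^T\mathbf{Y})=\mathrm{Vec}(\mathbf{X})^T\mathrm{Vec}(\mathbf{Y})$, and then use $\mathrm{Vec}(\boldsymbol{\Omega})=\mathbf{P}\boldsymbol{\omega}$. The only cosmetic difference is that the paper cycles ${\mathbf{a}_n^{(l)}}^T$ to the right to obtain $\mathrm{Tr}(\boldsymbol{\Omega}\,\mathbf{A}^{(l)}\mathbf{b}_n^{(l)}{\mathbf{a}_n^{(l)}}^T)$ directly, whereas you first reach $\mathrm{Vec}(\mathbf{u}\mathbf{v}^T)^T\mathrm{Vec}(\boldsymbol{\Omega})$ and then invoke $\boldsymbol{\Omega}=\boldsymbol{\Omega}^T$ to swap the factor ordering; both variants rely on the symmetry of $\boldsymbol{\Omega}$ (explicitly in your case, implicitly in the paper's step from $\mathrm{Tr}(\boldsymbol{\Omega}\cdot)$ to $\mathrm{Vec}(\cdot)^T\mathrm{Vec}(\boldsymbol{\Omega})$), so there is no substantive gap.
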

\begin{proof}
By leveraging the same matrix properties provided in Appendix~\ref{appen3}, we proceed by rewriting 
\begin{align}
    h_n &\approx {\mathbf{a}_n^{(l)}}^T(\mathbf{I}_M\ - \mathbf{\Omega}\mathbf{A}^{(l)})\mathbf{b}_n^{(l)} \nonumber \\
    &= {\mathbf{a}_n^{(l)}}^T\mathbf{b}_n^{(l)} - \mathrm{Tr}(\mathbf{\Omega}\mathbf{A}^{(l)}\mathbf{b}_n^{(l)}{\mathbf{a}_n^{(l)}}^T) \nonumber \\ &=  {\mathbf{a}_n^{(l)}}^T\mathbf{b}_n^{(l)} - \mathrm{Vec}(\mathbf{A}^{(l)}\mathbf{b}_n^{(l)}{\mathbf{a}_n^{(l)}}^T)^T\mathrm{Vec}(\mathbf{\Omega}) \nonumber \\
    &=   {\mathbf{a}_n^{(l)}}^T\mathbf{b}_n^{(l)} - \mathrm{Vec}(\mathbf{A}^{(l)}\mathbf{b}_n^{(l)}{\mathbf{a}_n^{(l)}}^T)^T\mathbf{P}\boldsymbol{\omega},
\end{align}
matching \eqref{eq:hnreform}.
\end{proof}
\begin{lemma}\label{theorem:omegaconvex}
The $i_{dc}$ in \eqref{eq:parseval} is convex w.r.t. $\boldsymbol{\omega}$.
\end{lemma}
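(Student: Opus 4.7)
The plan is to mirror the proof of Lemma~\ref{theorem:1} almost verbatim, replacing the role of $\bar{s}_n$ by $\boldsymbol{\omega}$. The key enabling fact is Proposition~\ref{theorem:vecotorization}, which expresses the cascade channel as an affine function of $\boldsymbol{\omega}$, namely
\[
 h_n = {\mathbf{a}_n^{(l)}}^T\mathbf{b}_n^{(l)} - {\mathbf{f}_n^{(l)}}^T\boldsymbol{\omega},
\]
with the coefficients depending only on the fixed local point $\mathbf{Z}^{(l)}$ and the channels. This is the whole point of the Neumann linearization: once we have Proposition~\ref{theorem:vecotorization}, the dependence of the cascade channel on the optimization variable is affine.

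First, I would note that in the beamforming subproblem the waveform $\{s_n\}$ is held fixed, so each product $s_n h_n$ inherits the affine dependence on $\boldsymbol{\omega}$. Consequently, the received signal
\[
 y(t) = \sum_{n=1}^{N} \Re\bigl\{s_n h_n e^{j2\pi f_n t}\bigr\}
\]
is, for every $t$, a real-affine function of $\boldsymbol{\omega}$ (treated through its real/imaginary decomposition).

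Next, I would reuse the argument from Lemma~\ref{theorem:1} to show that $i_{dc}$ is convex in $y(t)$. With $\bar{n}=4$, the expression \eqref{eq:dccurrent} is a nonnegative linear combination of $\mathbb{E}\{y(t)^2\}$ and $\mathbb{E}\{y(t)^4\}$. The maps $u\mapsto u^2$ and $u\mapsto u^4$ are convex on $\mathbb{R}$, the expectation is a nonnegative linear operator and therefore preserves convexity, and nonnegative linear combinations of convex functions are convex. Hence $i_{dc}$ is convex in $y(t)$.

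Finally, I would invoke the standard composition rule \cite{boyd2004convex}: the composition of a convex function with an affine map is convex. Since $y(t)$ is affine in $\boldsymbol{\omega}$ and $i_{dc}$ is convex in $y(t)$, it follows that $i_{dc}$ is convex in $\boldsymbol{\omega}$. The only subtle point—and hence the main obstacle if any—is to make sure the composition argument is valid in the complex setting; this is handled by identifying $\boldsymbol{\omega}\in\mathbb{C}^{M(M+1)/2}$ with its real embedding in $\mathbb{R}^{M(M+1)}$, under which both the affinity of $y(t)$ in $\boldsymbol{\omega}$ and the convexity of $i_{dc}$ remain intact. This completes the proof sketch.
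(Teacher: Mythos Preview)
Your proposal is correct and follows essentially the same route as the paper: the paper also invokes Proposition~\ref{theorem:vecotorization} to get affinity of $h_n$ in $\boldsymbol{\omega}$, appeals to the Lemma~\ref{theorem:1} argument for convexity of $i_{dc}$ in the intermediate variable, and concludes via the affine-composition rule. The only cosmetic difference is that the paper phrases the convexity step through $h_n$ rather than $y(t)$, and it does not spell out the real-embedding caveat you mention.
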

\begin{proof}
    It is evident from equation \eqref{eq:hnreform} that $h_n$ is affine w.r.t. $\boldsymbol{\omega}$. Furthermore, $i_{dc}$ is convex w.r.t. $h_n$, which can be easily proven using Lemma~\ref{theorem:1}. Thus, $i_{dc}$ w.r.t. $\boldsymbol{\omega}$ is a composition of an affine function and a convex function, which is convex.
\end{proof}

Proposition~\ref{theorem:vecotorization} and Lemma~\ref{theorem:omegaconvex} ensure that substituting \eqref{eq:hnreform} into \eqref{fixedW_a} leads to maximizing a convex objective function w.r.t. $\boldsymbol{\omega}$, which can be approximated similar to the waveform optimization procedure for fixed $h_n$. Specifically, the convex objective can be approximated using its first-order Taylor expansion at each local point and iteratively optimized. Thus, we proceed by writing the first-order Taylor's coefficient as
\begin{multline}\label{eq:taylor2}
     \mathbf{u}({\boldsymbol{\omega}}) = -K_2 \sum_n h_n\norm{s_n}^2 {\mathbf{f}_n^{(l)}}+ \\ -\frac{3K_4}{8} \sum_{\substack{n_0, n_1, n_2, n_3 \\ n_0 + n_1 = n_2 + n_3}}s^\star_{n_0}s^\star_{n_1}s_{n_2}s_{n_3}\biggl[h_{n_1}^\star h_{n_2}h_{n_3}{\mathbf{f}_{n_0}^{(l)}}^* + \\
    h_{n_0}^\star h_{n_2}h_{n_3}{\mathbf{f}_{n_1}^{(l)}}^* + 
    h_{n_0}^\star h_{n_1}^\star h_{n_3}{\mathbf{f}_{n_2}^{(l)}} + 
    h_{n_0}^\star h_{n_1}^\star h_{n_2}{\mathbf{f}_{n_3}^{(l)}}\biggr].
\end{multline}
For the sake of simplicity, let us denote $\mathcal{D}_\mathcal{G}$ as the set of indices in vector $\boldsymbol{\omega}$ that are zero due to the lack of connections between specific ports in the impedance network. By defining $\gamma \ll 1$ and setting 
\begin{equation}\label{eq:tau_update}
    \tau^{(l)} = {\gamma}/{\norm{(\mathbf{Z}^{(l)} + Z_0\mathbf{I}_N)^{-1}}_\infty}
\end{equation}
such that \eqref{eq:deltacondition} holds, and leveraging Proposition~\ref{theorem:vecotorization} to remove the constraint in \eqref{fixedWb}, \eqref{fixedW} can be transformed into a convex problem in the neighborhood of $\boldsymbol{\omega}^{(l)}$ as 
\begin{subequations}\label{fixedWreform}
\begin{align}
\label{fixedWreforma} \minimize_{\boldsymbol{\omega}} \quad & \xi_2 =-  {\mathbf{u}({\boldsymbol{\omega}^{(l)}})}^T \boldsymbol{\omega} \\
\textrm{subject to} \label{fixedWreformc}  \quad & \norm{[\boldsymbol{\omega}]_{r}} \leq \tau^{(l)}, \forall r, \\
  \label{fixedWreformd} \quad & [\boldsymbol{\omega}]_{r} = 0, r \in \mathcal{D}_\mathcal{G},
\end{align}
\end{subequations}
where \eqref{fixedWreformd} ensures that $\mathbf{\Omega}$ has the same zero entries as $\mathbf{Z}$ depending on the structure of BD-RIS. Note that for a fully connected BD-RIS, constraint \eqref{fixedWreformd} can be removed.

{Let us proceed by defining $\bar{M} = M(M + 1)/2$ and writing the Lagrangian of \eqref{fixedWreform} as
\begin{align}
    \mathcal{L}(\boldsymbol{\omega}) &= -\sum_{r = 1}^{\bar{M}} [\mathbf{u}^{(l)}]_r[\boldsymbol{\omega}]_r \nonumber  \\ &+ \sum_{r = 1}^{\bar{M}} \nu_r^{(l)} \biggl({\norm{[\boldsymbol{\omega}]_{r}}}^2 - {\tau^{(l)}}^2 \biggr) + \sum_{r \in \mathcal{D}_\mathcal{G}} \bar{\lambda}_r^{(l)}[\boldsymbol{\omega}]_{r}.
\end{align}
It is obvious that by forcing $[\boldsymbol{\omega}]_{r} = 0, \forall r \in \mathcal{D}_\mathcal{G}$, we have $\bar{\lambda}_r^{(l)}[\boldsymbol{\omega}]_{r} = 0, \forall r \in \mathcal{D}_\mathcal{G}$. Meanwhile, for $r\notin \mathcal{D}_\mathcal{G}$, the derivative of the Lagrangian function is given by
\begin{align}
    \frac{\partial\mathcal{L}(\boldsymbol{\omega})}{\partial[\boldsymbol{\omega})]_r} &=  -[\mathbf{u}^{(l)}]_r + 2\nu_r^{(l)} [\boldsymbol{\omega}]_r, \quad r\notin \mathcal{D}_\mathcal{G}.
\end{align}
We know that the solution must satisfy KKT conditions \cite{boyd2004convex}. Hence, by setting the derivative to zero, we can write the closed-form expression of the solution at the $l$th iteration as 
\begin{align}
    [\boldsymbol{\omega}^*]_r &= \begin{cases}
        0, & r \in \mathcal{D}_\mathcal{G}, \\
        {[\mathbf{u}^{(l)}]_r}/{ 2\nu_r^{(l)}}, & \text{otherwise},  
     \end{cases} \label{eq:Z_omega_star}.
\end{align}
Then, using the complementary slackness for the dual variable $\nu_r^{(l)}$, and leveraging \eqref{eq:Z_omega_star} and the fact that $\nu_r^{(l)} \geq 0, \forall r$, we have 
\begin{align}
    \nu_r^{(l)} &= \begin{cases}
        0, & r \in \mathcal{D}_\mathcal{G}, \\
        \norm{[\mathbf{u}^{(l)}]_r}/2{\tau^{(l)}}, & \text{otherwise}.  \label{eq:Z_nu_update}
        \end{cases}
\end{align}
Moreover, we introduce $\rho_\omega \leq 1$ to control the convergence of the solution, leading to writing the solution at the end of the $l$th iteration as
\begin{equation}\label{eq:Z_omega_update}
    \boldsymbol{\omega}^{(l + 1)} = \boldsymbol{\omega}^{(l)} + \rho_\omega(\boldsymbol{\omega}^* - \boldsymbol{\omega}^{(l)}),
\end{equation}
which can be leveraged to solve \eqref{fixedWreform} iteratively.} Moreover, it was verified that $\mathrm{Vec}(\mathbf{\Omega}^{(l + 1)}) = \mathbf{P}\boldsymbol{\omega}^{(l + 1)}$; thus, $\mathbf{\Omega}^{(l + 1)} = \mathrm{Vec}^{-1}_{M, M}(\mathrm{Vec}(\mathbf{\Omega}^{(l + 1)}))$ can be used for updating $\mathbf{Z}$ as
\begin{equation}\label{eq:Z_Z_update}
    \mathbf{Z}^{(l + 1)} = \mathbf{Z}^{(l)} + j\Im\{\mathbf{\Omega}^{(l + 1)}\},
\end{equation}
where only the imaginary part of $\mathbf{\Omega}^{(l + 1)}$ is updated to ensure that the impedance matrix, i.e., $\mathbf{Z}$, remains imaginary.

Algorithm~\ref{alg:BDRIS} illustrates the proposed iterative optimization method relying on alternating optimization, SCA, KKT conditions, and the Neumann approximation for BD-RIS-aided WPT (IT-BDRIS) given $\mathcal{D}_\mathcal{G}$. First, we initialize the variables such that the impedance matrix is set to the reference impedance value, while \eqref{eq:SMF} is used to initialize $s_n$. Then, the alternating optimization is done through lines~\ref{alg1:line:alter_start}-\ref{alg1:line:alter_end}. Specifically, each alternating optimization iteration consists of iteratively updating $\mathbf{Z}$ given fixed $s_n, \forall n$ in lines~\ref{alg1:line:digitSCA_start}-\ref{alg1:line:digitSCA_end}, followed by obtaining $s_n, \forall n$ using Algorithm~\ref{alg:waveformsca}. This procedure is repeated iteratively until convergence.

The solution of the iterative beamforming design obtained in Algorithm~\ref{alg:BDRIS} lines~\ref{alg1:line:digitSCA_start}-\ref{alg1:line:digitSCA_end} converges to a stationary point of the approximate problem \eqref{fixedWreform} (the proof is similar to Theorem~\ref{theo_conv_proof_WF}). Although the solution does not necessarily represent a stationary point of the original problem \eqref{fixedW}, one can claim that the solution is close to a stationary point of \eqref{fixedW} if the Neumann series approximation is tight. Algorithm~\ref{alg:BDRIS} ensures the tightness of the Neumann series approximation by setting $\gamma \ll 1$ in \ref{eq:tau_update}, which leads to satisfying \eqref{eq:deltacondition} \cite{direnzomutual2, direnzoclerckxmutual, stewart1998matrix}. Following Theorem~\ref{theo_conv_proof_WF}, Algorithm~\ref{alg:BDRIS} provides an approximate feasible solution for \eqref{graph_problem}.

\subsection{Extension to Scenarios with a Direct ET–ER Link}

We now discuss the extension of the proposed algorithms to scenarios where a direct transmitter-receiver link is present. Let us denote the direct component as ${h}_{D, n}$, leading to the cascade channel being formulated as $h_n = {h}_{D, n} + \mathbf{h}_{R,n}^T\mathbf{\Theta}{\mathbf{h}_{I,n}}$. This does not change the procedure of the waveform optimization algorithm since the cascade channel is an input to Algorithm~\ref{alg:waveformsca}. For SDR and SDP-based beamforming, we can rewrite \eqref{eq:hnreform2} as $h_n = g{h}_{D, n} + \mathbf{a}_n^T \boldsymbol{\theta}$, where $g$ is an auxiliary variable. Then, $h_n$ can be reformulated as $h_n = \mathbf{\hat{a}}_n^T \boldsymbol{\hat{\theta}}$, where $\boldsymbol{\hat{\theta}} = [\boldsymbol{\theta}, g]$ and $ \mathbf{\hat{a}}_n = [\mathbf{a}_n, {h}_{D, n}]$ are the concatenated vectors. Next, Algorithm~\ref{alg:SDPRIS} can be used to obtain $\boldsymbol{\theta}^*$, and the final solution can be computed by removing the last element in $\boldsymbol{\theta}^*/g$. Note that IT-BDRIS and SCA-BDRIS beamforming can be directly applied to the cases with a direct link only by substituting the new cascade channel (with a direct link).

\begin{algorithm}[t]
	\caption{Iterative beamforming and waveform optimization for BD-RIS-aided WPT relying on SCA, KKT, and the Neumann approximation (IT-BDRIS).} \label{alg:BDRIS}
	\begin{algorithmic}[1]
            \State \textbf{Input:} $\mathcal{G}$, $\beta$, $\upsilon$, $\gamma$, $\mathbf{h}_{R,n}, \mathbf{h}_{I,n}, \forall n$ \quad \textbf{Output:} $s_n^{(l)}, \forall n$, $\mathbf{Z}^{(l)}$
            \State \textbf{Initialize:} 
            \State $f^\star =0$, $i_{dc} = \infty$ \label{alg2:line:init_start}, $[\mathbf{Z}^{(l)}]_{i,m} = jZ_0, \forall i,m$ 
            \State Compute $\mathbf{P}$, $h_n$, and $\tau^{(l)}$ using \eqref{eq:permute_matrix}, \eqref{eq:hnreform}, and \eqref{eq:tau_update}
            \State Initialize $s_n^{(l)}, \forall n$ using \eqref{eq:SMF}
            \Repeat\label{alg1:line:alter_start}
                \State \hspace{-2mm} $\xi_2 =\infty$, $i_{dc}^\star \leftarrow i_{dc}$
                \Repeat\label{alg1:line:digitSCA_start}
                    \State \hspace{-2mm} $\xi_2^\star \leftarrow \xi_2$, compute $\nu_r^{(l)}, \forall r$ using \eqref{eq:Z_nu_update}
                    \State \hspace{-2mm} Obtain $\mathbf{Z}^{(l + 1)}$ using \eqref{eq:Z_omega_star}, \eqref{eq:Z_omega_update}, and \eqref{eq:Z_Z_update}
                    \State \hspace{-2mm} Compute $\xi_2$ and $\tau^{(l + 1)}$ using \eqref{fixedWreforma} and \eqref{eq:tau_update}
                    \State \hspace{-2mm} $\mathbf{Z}^{(l)} \leftarrow \mathbf{Z}^{(l + 1)}$, $\tau^{(l)} \leftarrow \tau^{(l + 1)}$, $l \leftarrow l  +1$
                \Until{$\norm{1 - {\xi_2^\star}/{\xi_2}}\leq \upsilon$}\label{alg1:line:digitSCA_end}
                \State \hspace{-2mm} Compute $h_n$ using \eqref{eq:hnreform}
                \State \hspace{-2mm} Run Algorithm~\ref{alg:waveformsca} to update $s_n^{(l)}$, compute $i_{dc}$ using \eqref{eq:parseval}
            \Until{$\norm{1 - i_{dc}^\star/ i_{dc}}\leq \upsilon$}\label{alg1:line:alter_end}
\end{algorithmic} 
\end{algorithm}

\subsection{Complexity Analysis}\label{subsec:companal}

Herein, we discuss the time complexity of the proposed algorithms in detail:\\
\textit{IT-WF:} The time complexity at each optimization iteration is $\mathcal{O}(N)$, which corresponds to computing $\{s_n\}$.\\
\textit{SDR-BDRIS:} The number of variables in \eqref{SDP_problem} is ${\bar{M}(\bar{M} + 1)}/2$, while the rest of the entries in $\mathbf{X}$ are determined according to the Hermitian structure, and the size of these Hermitian matrix sub-space is ${\bar{M}}^2$. Additionally, the number of constraints scales with $\bar{M}^2$ in \eqref{SDP_problem}. Interior-point methods require $\mathcal{O}(\sqrt{n})$ iterations, where $n$ is the number of constraints, and each iteration costs $\mathcal{O}\!\big(\max\{n \bar M^{3},\, n^{2}\bar M^{2},\, n^{3}\}\big)$ \cite{boyd2004convex}. Thus, the per-iteration complexity is $\mathcal{O}(\bar M^{6})$, and the overall complexity grows polynomially with $\bar M$ in addition to the $\mathcal{O}(N)$ cost of IT-WF. Thus, SDR-BDRIS is significantly heavier than IT-BDRIS for large BD-RIS sizes.\\
\textit{IT-BDRIS:} In the most time-consuming case, i.e., fully connected impedance network,\footnote{Notice that the increased complexity of optimizing BD-RIS also leads to higher processing power consumption for computing the corresponding scattering matrix, which should be considered in practical implementations.} $\boldsymbol{\omega}$ includes $\bar{M}$ non-zero entries, leading to $\mathcal{O}(\bar{M})$ complexity for obtaining the solution and the dual variables. Combining this with the negligible complexity for obtaining $\mathbf{Z}^{(l + 1)}$ leads to $\mathcal{O}(\bar{M})$ time complexity for each optimization iteration. Thus, the proposed IT-BDRIS is much more efficient than the traditional solver-based SCA. \\
\textit{SCA-BDRIS:} Each inner problem in \eqref{scatter_probelm_SCA} involves an $M\times M$ matrix variable and a fixed small number of linear matrix inequalities; thus, the per-iteration complexity is $\mathcal{O}(M^{3})$. Since several outer SCA iterations are required, the overall complexity is polynomial in $M$, higher than IT-BDRIS but substantially lower than the SDR-based formulation.

\section{Numerical Analysis}\label{sec:numerical}

\begin{figure}[t]
    \centering
    \includegraphics[width=0.9\columnwidth]{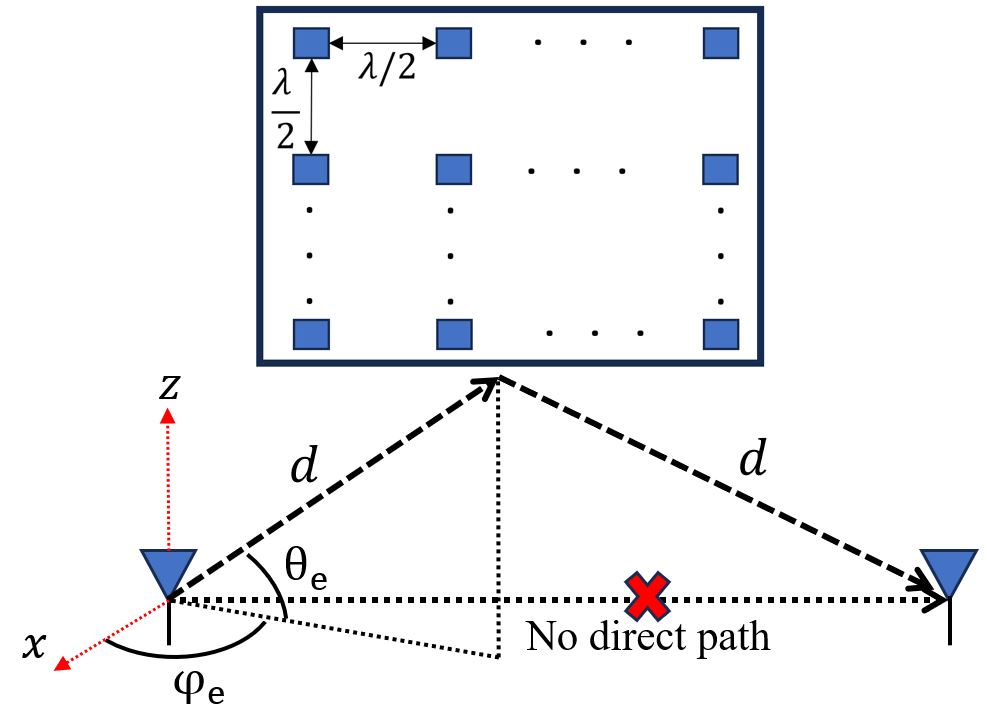} 
    \caption{The simulation layout.}
    \label{fig:layout}
\end{figure}

\begin{figure*}[t]
    \centering
    \includegraphics[width=0.245\textwidth]{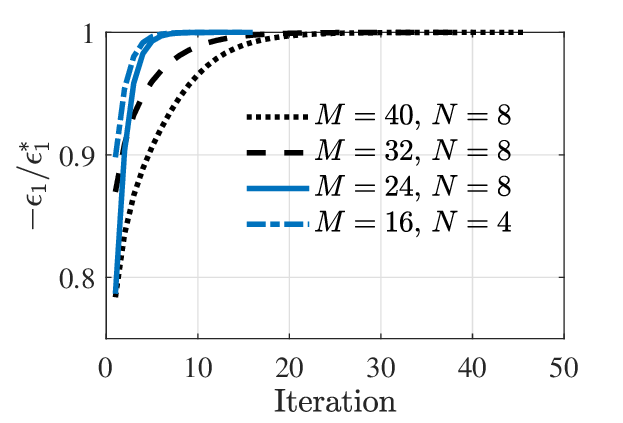}
    \includegraphics[width=0.245\textwidth]{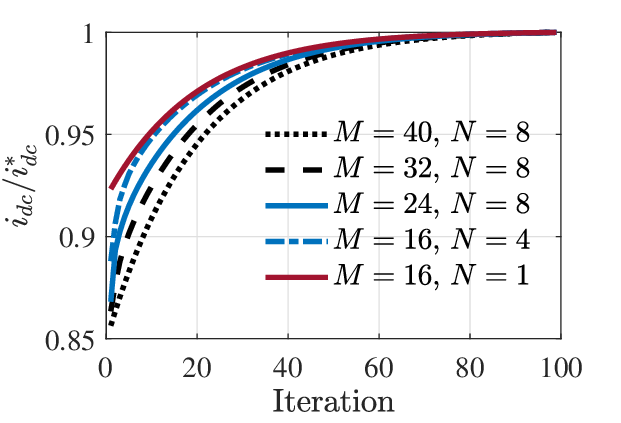} 
    \includegraphics[width=0.245\textwidth]{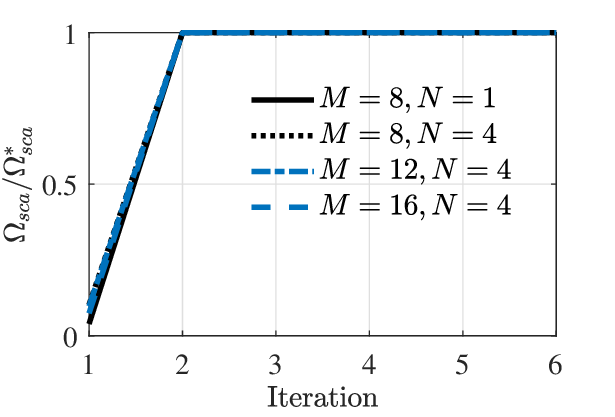}
    \includegraphics[width=0.245\textwidth]{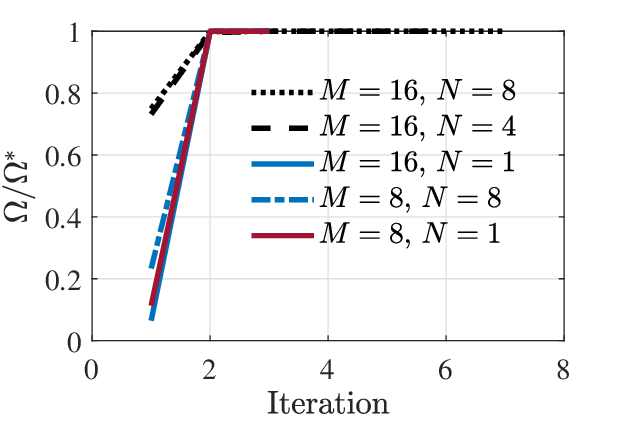}
    \caption{The convergence performance of (a) IT-WF (left), (b) IT-BDRIS (middle-left), (c) SCA-BDRIS (middle-right), (d) SDR-BDRIS (right). We adopt a random channel realization and assume a fully connected BD-RIS.}
    \label{fig:convergence}
\end{figure*}

In this section, we evaluate the system's performance in a WiFi-like scenario with $f_c = 2.4$~GHz, a bandwidth of $\textrm{BW} = 10$~MHz, and $\Delta f = \textrm{BW}/N$. We consider the path gain due to large-scale fading at a distance \( d \) to be \( L_0 d^{-2} \), where \( L_0 = -40~\mathrm{dB} \) is the reference path power gain at a 1~m distance. The BD-RIS elements are positioned on a uniform planar array with $\lambda/2$ as the distance between two adjacent elements, where $\lambda$ is the wavelength. For the sake of simplicity, we consider a symmetric simulation setup where the incident and reflective path lengths are $d = 2$~m, while $\theta_e = \phi_e = \pi/6$ are the elevation and azimuth angles. The simulation layout is illustrated in Fig.~\ref{fig:layout}. The number of iterations for Algorithm~\ref{alg:PS_obtain} and Gaussian randomization is set to 50000. In SCA-BDRIS, initial value of $\sigma$ is set to $10^{-5}$ and $\iota = 1$. Finally, the rectifier parameters are selected based on the circuit simulations in \cite{clerckx2018beneficial, clreckxWFdesign} for the small-signal regime such that $K_2 = 0.17$ and $K_4 = 957.25$.

Channels are modeled with quasi-static Rician fading as
\begin{equation}
    \mathbf{h}_n = \sqrt{\kappa/(\kappa + 1)}\mathbf{h}_n^{\text{LoS}} + \sqrt{1/(\kappa + 1)}\mathbf{h}_n^{\text{NLoS}},
\end{equation}
where $\kappa$ is the Rician factor and the $\mathbf{h}_n^{\text{LoS}}$ is modeled using the far-field LoS channel for uniform planar arrays with elevation and azimuth angles as inputs. The NLoS part is modeled with Rayleigh fading considering $L = 18$ delay taps with realizations following a circularly symmetric complex Gaussian distribution with a random power $p_l$, such that $\sum_{l = 1}^L p_l = 1$. Furthermore, we introduce an additional parameter $\alpha$ for the NLoS part such that the coherence bandwidth is $B_c = \alpha \textrm{BW}$. Hereby, the delay spread is $\varsigma = 1/B_c$, while the maximum delay spread is considered to be $\varsigma_m = 2 \varsigma$. Then, the tap delays $t_l, \forall l$ are generated uniformly spaced up to $\varsigma_m$, and the frequency response of each tap at the $n$th sub-carrier is multiplied by its corresponding delay term $e^{j2\pi f_n t_l}$. Hereby, selecting a small $\alpha$ such that $B_c$ is much smaller than $\textrm{BW}$ leads to more frequency selectivity of the channel. Nevertheless, a large $\alpha$ leads to $B_c \gg \textrm{BW}$, which means all delay taps can be received within the coherence bandwidth, leading to a frequency-flat channel. 

Note that the transmitter-RIS and RIS-receiver distances are chosen such that the far-field communications assumption holds. Specifically, the incident and reflective path lengths are larger than the Fraunhofer distance\cite{MyEBDMA}.\footnote{The Fraunhofer distance is formulated as $2D^2/\lambda$, where $D$ is the antenna diameter.}The results are averaged over 200 random channel realizations unless otherwise stated. To ensure the tractability of the nonlinear rectenna model and maintain the validity of the fourth-order approximation, we consider up to eight subcarriers, which is sufficient to capture the key performance trends and allow a fair comparison between D-RIS and BD-RIS architectures.

\subsection{Convergence Analysis}

Fig.~\ref{fig:convergence} illustrates the convergence performance of the IT-BDRIS, IT-WF, SCA-BDRIS, and SDR-BDRIS. It is seen that all algorithms iteratively converge toward a solution. Moreover, the number of required iterations increases with $M$ and $N$ as discussed in Section~\ref{subsec:companal}. Moreover,  Fig.~\ref{fig:convergence}.b-Fig.~\ref{fig:convergence}.d highlight that the proposed methods for alternating optimization of waveform and passive beamforming lead to convergence. Although the number of iterations in SDR-BDRIS is lower than in IT-BDRIS, each iteration of SDR-BDRIS requires solving multiple SDP problems with high complexity. At the same time, IT-BDRIS relies only on closed-form computations. Observe that the number of iterations for different setups is the same in SDR-BDRIS, but the time complexity of solving \eqref{SDP_problem} drastically increases with $M$ since the hermitian matrix subspace scales with $M^4$.


\subsection{Algorithms Performance Comparison}\label{subsec:converge}

Herein, we compare the performance of the proposed algorithms and consider also an optimal benchmark derived for $N = 1$. Note that for a single-carrier system ($N = 1$), problem \eqref{graph_problem} becomes equivalent to maximizing the RF power, i.e., the second-order term in $i_{dc}$, in terms of the solution. Thus, the framework introduced in \cite{BD-RIS_Graph} serves as a benchmark when $N=1$ and is labeled as optimal in Fig.~\ref{fig:bdris_compN14}.a.

Fig.~\ref{fig:bdris_compN14} compares the performance of the proposed algorithms for $N= 1$ and $N = 4$ as a function of $M$. Note that SDR-BDRIS achieves optimal performance, while SCA- and IT-BDRIS lead to relatively lower $i_{dc}$ values, but are still close to the optimal solution, and the gap increases with $M$. Moreover, for $N = 4$, the same trend is observed, where SDR-BDRIS outperforms the other proposed approaches. However, as stated earlier, the complexity of SDR-BDRIS is considerably higher than that of SCA-BDRIS and IT-BDRIS, especially for large values of $M$; thus, a trade-off exists between performance and complexity, which is observable in the figures. Moreover, SCA-BDRIS loses performance due to constraint approximation, which may lead to divergence from the feasible space over iterations. Note that there is no optimal benchmark for the multi-carrier case. Fig.~\ref{fig:bdris_comp_overN} compares the performance of the proposed algorithms as a function of $N$, showing that SDR-BDRIS achieves the best performance also in this case.

\begin{figure}[t]
    \centering
    \includegraphics[width=0.49\columnwidth]{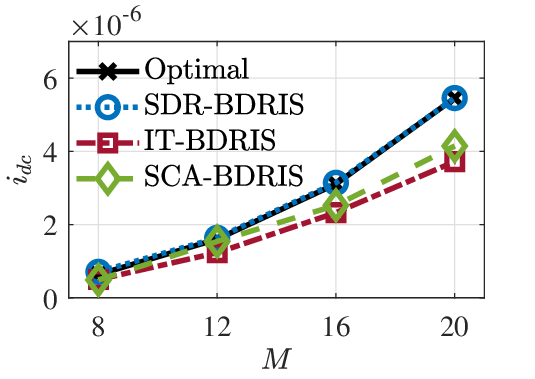} 
    \includegraphics[width=0.49\columnwidth]{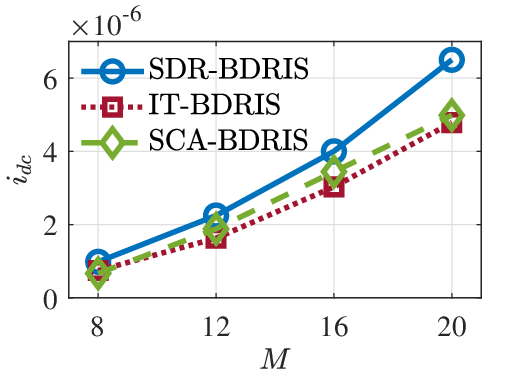}
    \caption{Average $i_{dc}$ at the ER as a function of $M$ using different algorithms for (a) $N = 1$ (left) and (b) $N = 4$ (right) with NLoS channel ($\alpha = 0.1$), $P_T = 50$~dBm, and a fully connected BD-RIS.}
    \label{fig:bdris_compN14}
\end{figure}

\begin{figure}[t]
    \centering
    \includegraphics[width=0.75\columnwidth]{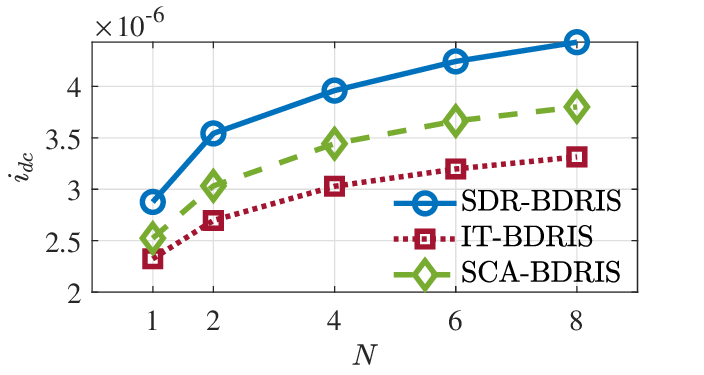} 
    \caption{Average $i_{dc}$ at the ER as a function of $N$ using different algorithms with $\alpha = 0.1$, $M = 16$, $P_T = 50$~dBm, and a fully connected BD-RIS.}
    \label{fig:bdris_comp_overN}
\end{figure}

\begin{table}[t]
\centering
\begin{tabular}{|c|c|c|c|c|}
\hline
\multirow{2}{*}{\textbf{Setup}} & \multicolumn{2}{c|}{\textbf{SDP-BDRIS}} & \multicolumn{2}{c|}{\textbf{SDR-BDRIS}} \\ 
\cline{2-5}
 & {$i_{dc}$ ($\mu$A)} & {DR} & {$i_{dc}$ ($\mu$A)} & {DR} \\ 
\hline
$M = 4,\ N = 8$ & 0.25 & 0.995 & 0.25 & 0.573 \\ 
\hline
$M = 8,\ N = 4$ & 1.04 & 0.998 & 1.04 & 0.57 \\ 
\hline
$M = 8,\ N = 8$ & 1.25 & 0.997 & 1.25 & 0.551 \\ 
\hline
$M = 12,\ N = 8$ & 2.08 & 0.996 & 2.08 & 0.684 \\ 
\hline
\end{tabular}
\caption{SDP-BDRIS and SDR-BDRIS performance comparison with 5 random realizations of Rayleigh fading channel.}
\label{tab:sdpsdr}
\end{table}

Table~\ref{tab:sdpsdr} compares the SDR-BDRIS and the method that models the rank-1 constraint using \eqref{eq:rank1atomic}, which is referred to as SDP-BDRIS.\footnote{Due to the high complexity of SDP-BDRIS, extensive simulations result in significant time overhead. Therefore, a brief comparison is presented to highlight the relative differences in the outcomes.} Herein, we illustrate the eigenvalue dominance ratio (DR) for each algorithm, which is computed by dividing the dominant eigenvalue of the obtained $\mathbf{X}$ by the sum of eigenvalues. It can be seen that the DR of SDP-BDRIS is much higher than SDR-BDRIS, ensuring a rank-1 solution. On the other hand, the DR values of SDR-BDRIS verify that it does not guarantee a rank-1 solution, and leveraging approximations to obtain a feasible solution is inevitable. Interestingly, it is observed that both algorithms lead to similar $i_{dc}$, and the rank-1 approximations of SDR-BDRIS only reduce its complexity and do not lead to performance losses.


\subsection{Waveform Behaviour \& Channel Analysis}

One of the key considerations of WPT with non-linear EH is the operating regime of the rectifier. Specifically, if the input RF power to the rectifier is relatively low, the RF-to-DC conversion efficiency drops significantly. On the other hand, if the RF input power is high such that the rectifier's diode enters the breakdown region, the mathematical model for the small-signal regime is no longer valid. Thus, it is important to select $P_T$ properly such that the received RF power is in the proper regime \cite{clerckx2018beneficial}. For this, we proceed with some discussions on the system's behavior for different $P_{T}$ values. 

Fig.~\ref{fig:poweralocation} presents the normalized gain of the channels and input signal for different $P_T$ and $\alpha$ values. Interestingly, we can see that the variation among the gains of the cascade channel at different sub-carriers becomes smaller as the incident and reflective channels become less frequency-selective (increasing $\alpha$), but obviously $P_T$ has no impact on it. This is caused by the nature of frequency-flat channels, where the channel values at different sub-carriers are almost equal, leading to the BD-RIS configuration impacting them similarly. However, when the channel becomes more frequency-selective, the BD-RIS configuration impacts the independent channels at different sub-carriers differently. Moreover, the power allocation pattern among the different sub-carriers of the input signal also changes with both $P_T$ and $\alpha$. Indeed, for low $P_T$, the whole power budget tends to be allocated to one or a few sub-carriers, especially as $\alpha$ decreases. The reason is that when $P_T$ is relatively small such that the input RF power to the rectifier is low, the rectifier operates in the low-efficiency regime. This leads to the fourth-order term of DC power being negligible and the second-order term becoming dominant. Thus, the designed waveform does not leverage all sub-carriers efficiently to generate a high peak-to-average power ratio (PAPR). On the other hand, when $P_T$ is chosen such that the RF power is in the desired range ($P_T = 50$~dBm in this case), the rectifier operates in the high-efficiency regime, and the fourth-order term becomes large enough to impact the DC harvested power leading to a more diverse pattern in power allocation among sub-carriers. Thus, $P_T$ needs to match properly with system parameters so that the rectifier non-linearity is leveraged efficiently. For instance, when $P_T = 50$~dBm, the power is allocated to the stronger sub-carriers in a frequency-selective scenario, while the power is allocated to all sub-carriers in a frequency-flat case with $\alpha = 10$. Motivated by this analysis, we consider $P_T = 50$~dBm in the rest of this section.

\begin{remark}
    The transmit power level of 50~dBm is chosen to ensure sufficient RF input at the rectifier under the single-antenna setup, which lacks spatial diversity gain. Although this power level is higher than typical practical values, it helps isolate the waveform design effects. In real-world systems with multi-antenna configurations, the required transmit power can be significantly reduced due to spatial beamforming and diversity gains.
\end{remark}

\begin{figure*}[t]
    \centering
    \includegraphics[width=0.32\textwidth]{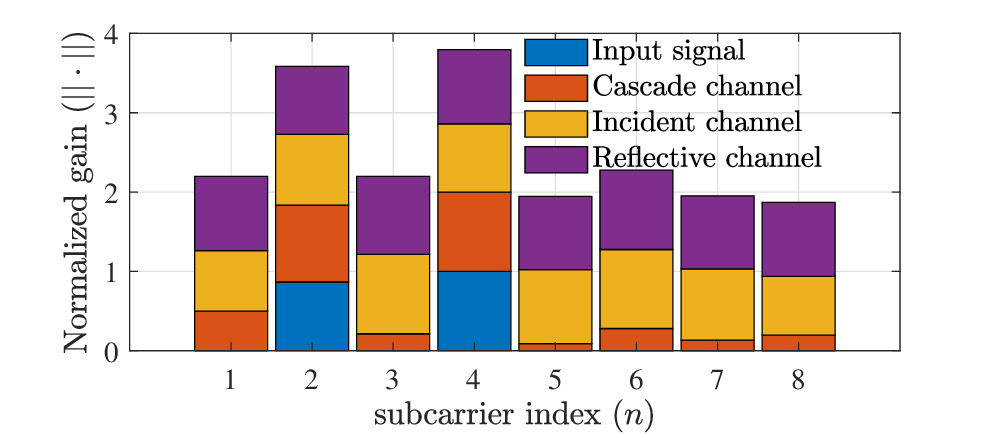}
    \includegraphics[width=0.32\textwidth]{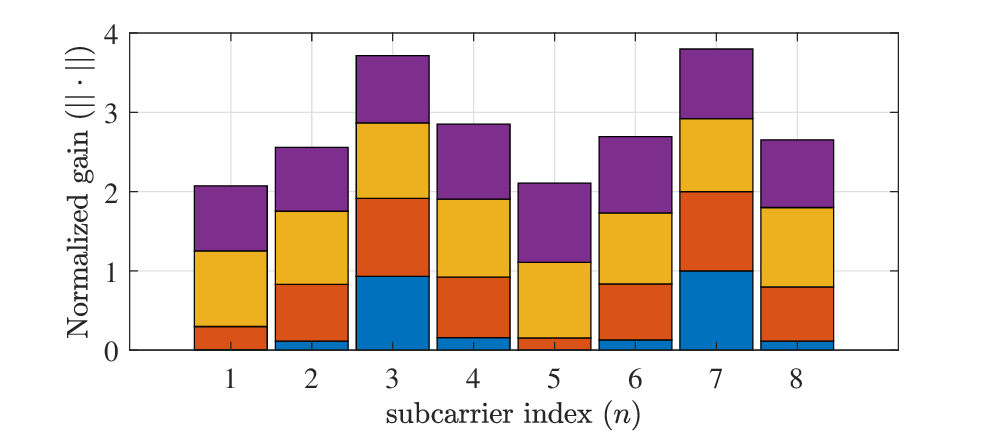}
    \includegraphics[width=0.32\textwidth]{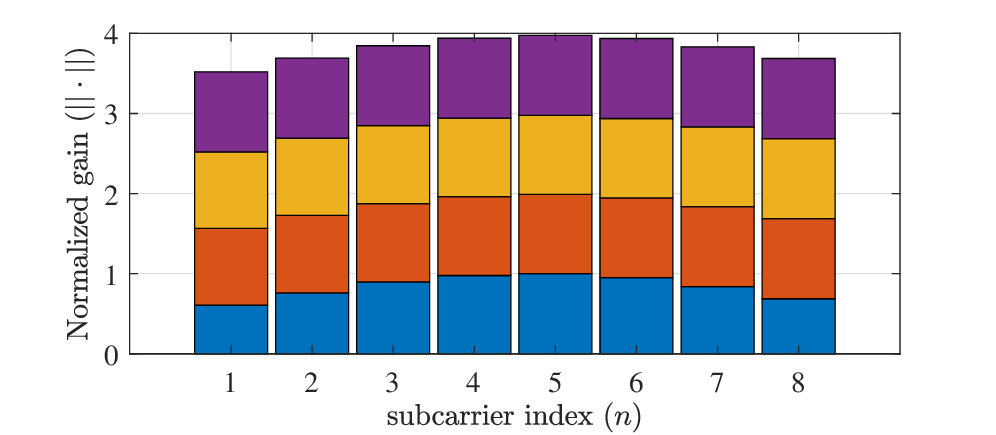} \\
    \includegraphics[width=0.32\textwidth]{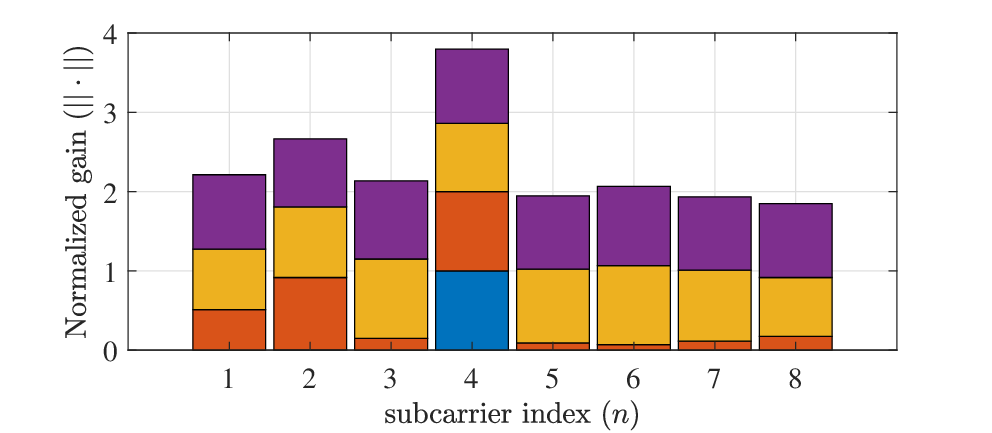}
    \includegraphics[width=0.32\textwidth]{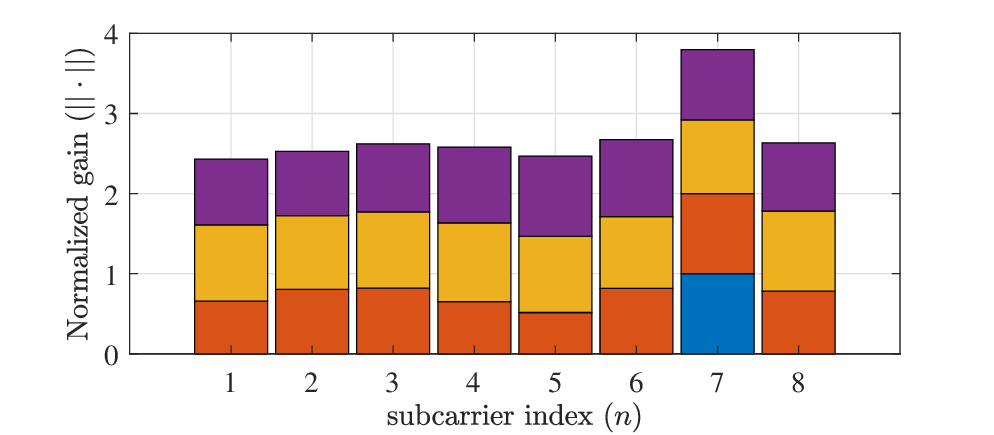}
    \includegraphics[width=0.32\textwidth]{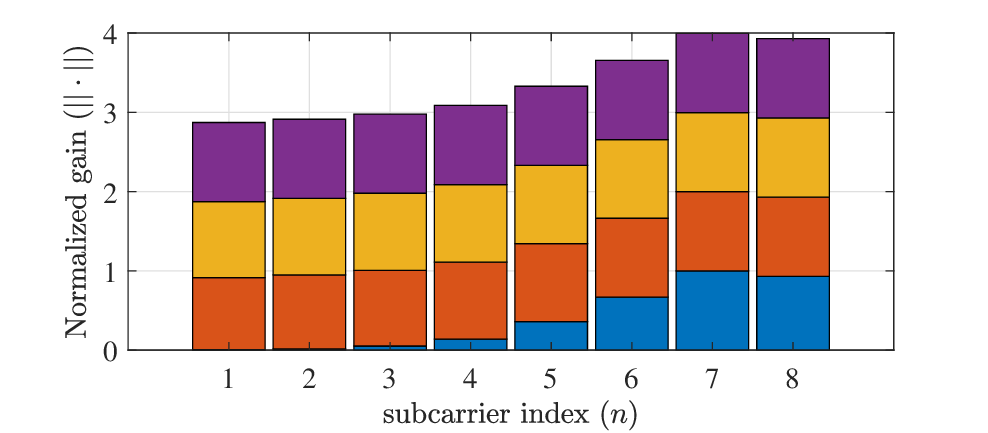}
    \caption{The normalized gain of the channels and input signal at different sub-carriers for (a) $P_T = 50$~dBm (top) and (b) $P_{T} = 30$~dBm (bottom) for $\alpha = 0.1$ (left), $\alpha = 1$ (middle), and $\alpha = 10$ (right). We assume a fully connected BD-RIS with $M=32$, $N=8$, and a pure NLoS ($\kappa = 0$) random channel realization. Note that the bars are represented in a stacked fashion, meaning that bars with the same length represent the same normalized gain.}
    \label{fig:poweralocation}
\end{figure*}

Fig.~\ref{fig:timesignal} shows the time domain signal for the same channel realizations as in Fig.~\ref{fig:poweralocation}. Note that higher PAPR signals are received at the ER for $P_T=50$~ dBm to leverage the rectifier's non-linearity efficiently, even for lower $\alpha$. On the other hand, for low $P_T$, the received signal has  PAPR $\approx 0$~dB for small $\alpha$ and lower PAPR compared to $P_T = 50$~dBm for large values of $\alpha$. Thus, as earlier stated, high PAPR signals are beneficial to leverage the rectifier's non-linearity, but only when the input RF power is in the high-efficiency regime of the rectifier.

\begin{figure}[t]
    \centering
    \includegraphics[width=0.48\columnwidth]{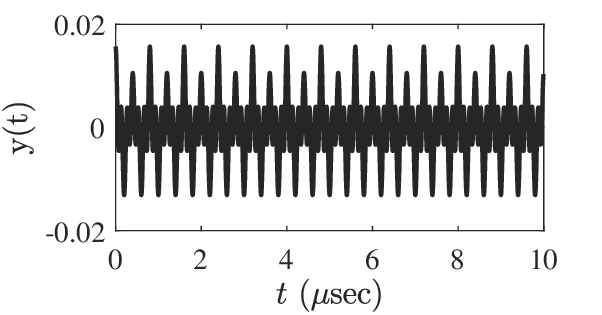}
    \includegraphics[width=0.48\columnwidth]{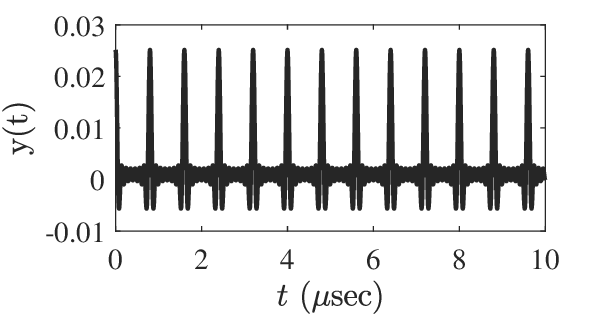} \\
    \includegraphics[width=0.48\columnwidth]{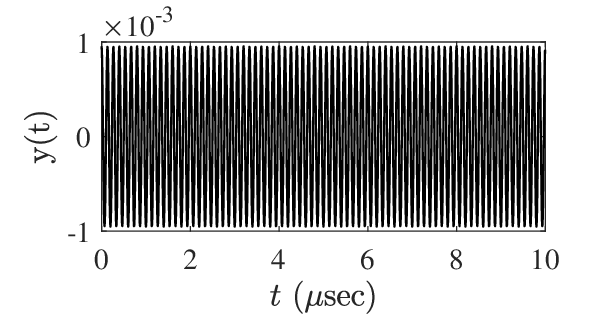}
    \includegraphics[width=0.48\columnwidth]{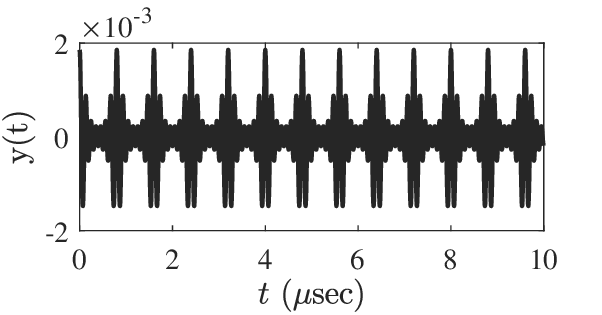}
    \caption{The received signal in time domain for (a) $P_T = 50$~dBm (top) and (b) $P_{T} = 30$~dBm (bottom) for $\alpha = 1$ (left) and $\alpha = 10$ (right) with fully connected BD-RIS, $M=32$, and $N=8$ at a random channel realization.}
    \label{fig:timesignal}
\end{figure}


\subsection{D-RIS versus BD-RIS}

Herein, we compare the performance of D-RIS and fully connected BD-RIS. The BD-RIS results are obtained using the SDR-BDRIS approach, while for D-RIS, this approach can be straightforwardly modified as presented in \cite{clerckxRISWPT}. 

Note that for far-field LoS, the channel is deterministic and is computed using the relative phase difference of elements. Thus, we assume that the variation among the sub-carriers is negligible such that for a given $\boldsymbol{\Theta}$, $h_n = h$. Hereby, \eqref{eq:parseval} can be rewritten as $\alpha_1 \norm{h}^2 + \alpha_2 \norm{h}^4$, where $\alpha_1$ and $\alpha_2$ are positive coefficients depending on the waveform and rectenna circuit. To maximize this, it is sufficient to maximize the cascade channel gain $\norm{h}$, which can be achieved by constructing a diagonal $\boldsymbol{\Theta}$ such that $[\boldsymbol{\Theta}]_{i, i} = e^{-j([\tilde{\mathbf{h}}_{R,n}]_i + [\tilde{\mathbf{h}}_{I,n}]_i)}, \forall i$. Such a scattering matrix can be designed using D-RIS; thus, D-RIS can achieve the optimal performance and BD-RIS does not introduce any additional gain. Moreover, for $N = 1$, \eqref{graph_problem} becomes equivalent to maximizing the RF power, which allows for extending the findings of \cite{BD-RIS_scattering_clerckx} to single-carrier WPT. Thus, for a channel with NLoS components, we expect that BD-RIS outperforms D-RIS even with a continuous wave. 

Fig.~\ref{fig:DBDoverM} illustrates the average $i_{dc}$ as a function of $M$ for D-RIS and BD-RIS. It is seen in Fig.~\ref{fig:DBDoverM}.a that for far-field LoS channels, BD-RIS achieves the same performance as D-RIS in both single-carrier and multi-carrier systems, which complies with the mathematical analysis. However, when the channel tends to become frequency-selective, as in Fig.~\ref{fig:DBDoverM}.b, BD-RIS can leverage the extra degrees of freedom to impact different channel components effectively such that $i_{dc}$ becomes higher compared to D-RIS. Fig.~\ref{fig:DBDoverN} verifies the same pattern over $N$ with a given $M$. Specifically, BD-RIS and D-RIS achieve the same performance for any $N$ under LoS, while for the Rician channel with NLoS components, the BD-RIS outperforms D-RIS and the performance gap increases with $N$.  

\begin{figure}[t]
    \centering
    \includegraphics[width=0.49\columnwidth]{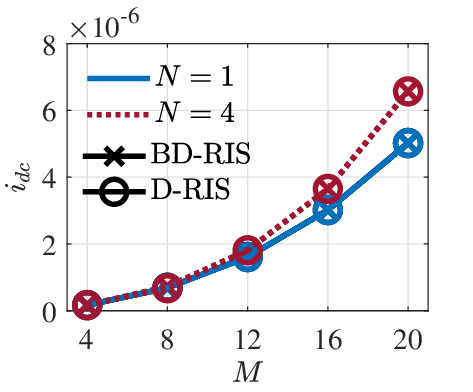}
    \includegraphics[width=0.49\columnwidth]{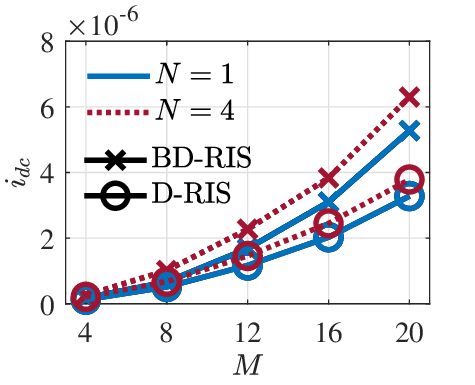} 
    \caption{Average $i_{dc}$ as a function of $M$ for (a) LoS (left) and (b) Rician channel with $\kappa = 0$~dB (right).}
    \label{fig:DBDoverM}
\end{figure}

\begin{figure}[t]
    \centering
    \includegraphics[width=0.49\columnwidth]{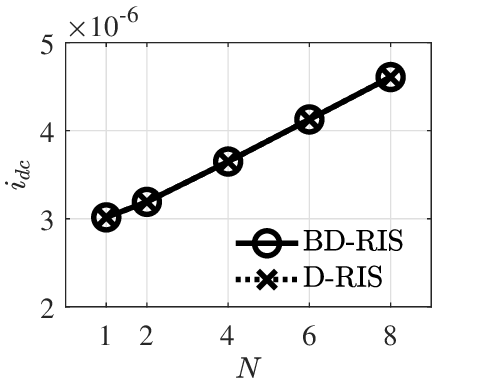} 
    \includegraphics[width=0.49\columnwidth]{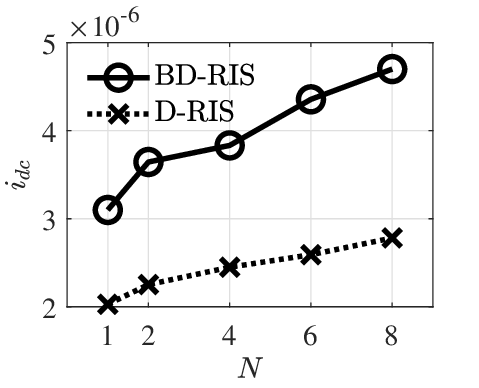}
    \caption{Average $i_{dc}$ as a function of $N$ for (a) LoS (left) and (b) Rician channel with $\kappa = 0$~dB (right) for $M = 16$.}
    \label{fig:DBDoverN}
\end{figure}

Fig.~\ref{fig:DBDchan} illustrates the normalized gain of the channels and the input signal at different sub-carriers. It is observed that the configuration of D-RIS/BD-RIS impacts the sub-carriers similarly under pure LoS, leading to small variations in the cascade channel. Thus, D-RIS provides sufficient degrees of freedom to compensate for the channels. On the other hand, for NLoS, BD-RIS leverages its additional degrees of freedom to create more dominant sub-carriers compared to D-RIS, leading to extra performance gains for frequency-selective channels. 

\begin{remark}
    Note that the performance gap might be different in the case of including power consumption-related constraints into the optimization problem in \eqref{graph_problem}. Specifically, this introduces trade-offs between circuit complexity and performance in BD-RIS, leading to a change of performance depending on the power consumption \cite{energy_BDRIS}. Notably, the fundamental Pareto tradeoff between performance and hardware complexity has been studied in \cite{Pareto_Frontier_nerini, BD-RIS_Graph}, and low-complexity optimal architectures exist, such as tree-connected and band/stem-connected, which can achieve the same performance as fully connected, but with lower hardware complexity \cite{Pareto_Frontier_nerini, BD-RIS_Graph}.
\end{remark}

\begin{figure*}[t]
    \centering
    \includegraphics[width=0.95\columnwidth]{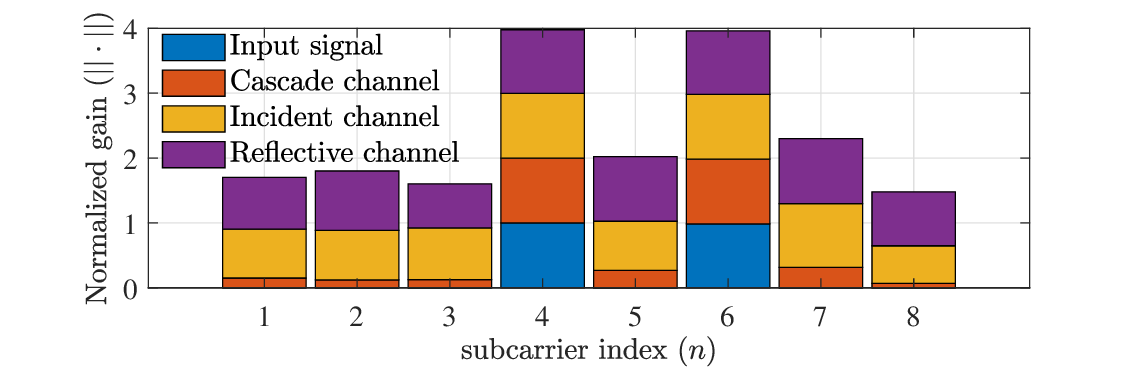} 
    \includegraphics[width=0.95\columnwidth]{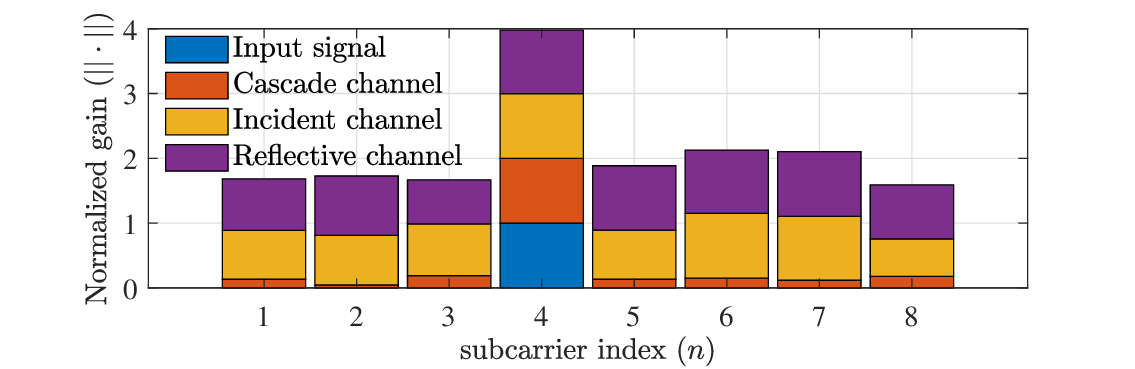} \\
    \includegraphics[width=0.95\columnwidth]{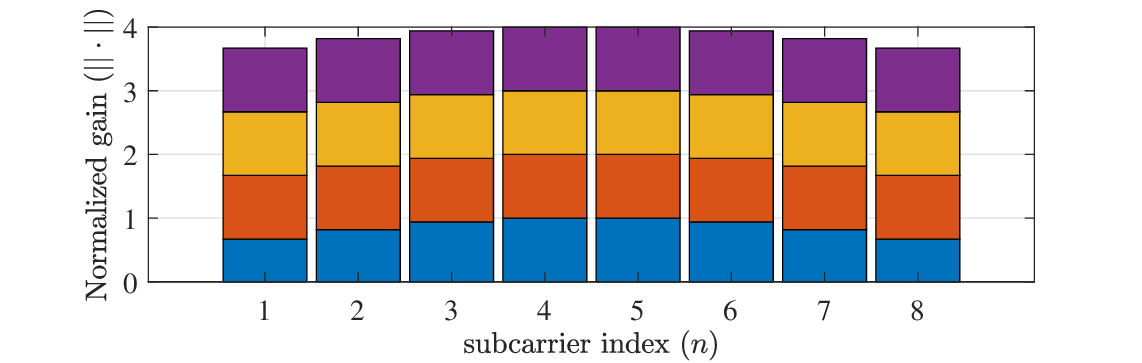} 
    \includegraphics[width=0.95\columnwidth]{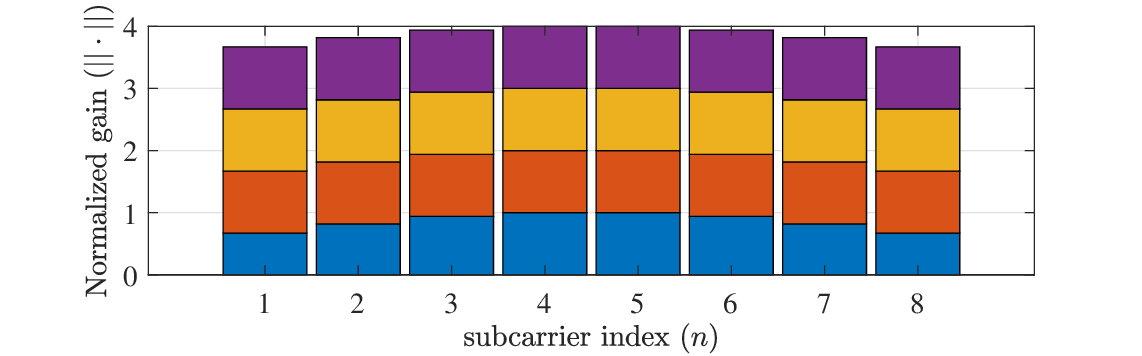} 
    \caption{The normalized gain of the channels and input signal at different sub-carriers for (a) NLoS channel (top) and (b) LoS channel (bottom) for BD-RIS (left) and D-RIS (right). We assume a random channel realization with $M = 16$.}
    \label{fig:DBDchan}
\end{figure*}

\section{Conclusions and Future Works}\label{sec:conclude}

We considered a BD-RIS-aided SISO WPT system with EH non-linearity. Moreover, we formulated a joint beamforming and waveform optimization problem aiming to maximize the harvested power at the ER. To address this problem, we decoupled the waveform optimization and beamforming problems and leveraged alternating optimization. We proposed an efficient iterative method for waveform optimization and four beamforming approaches, namely SDR-BDRIS, SDP-BDRIS, SCA-BDRIS, and IT-BDRIS. Furthermore, IT-BDRIS has the lowest computational complexity but at the cost of performance losses. Meanwhile, it was seen that SDR-BDRIS achieves the same performance as SDP-BDRIS with less computational complexity. The simulation results proved that the proposed algorithms converge to a local optimal solution, while the complexity scales with the number of RIS elements and the number of sub-carriers. Moreover, SDR-BDRIS  achieves the optimal performance for a single-carrier system and outperforms IT-BDRIS in other scenarios. Interestingly, we proved that BD-RIS achieves the same performance as D-RIS under LoS conditions (in the absence of mutual coupling), while it outperforms D-RIS in NLoS cases. Furthermore, our findings demonstrated that the frequency selectivity of the channel impacts the way that BD-RIS shapes the cascade channel. Finally, we showed that the waveform power allocation to different sub-carriers depends on the transmit power and the rectifier's operating regime.

As a prospect for future research, one can consider the following directions. The power consumption model of the BD-RIS, which depends on its architecture and the interconnections between different ports, can be analyzed to explore the trade-offs between complexity, power consumption, and performance. Moreover, the power consumption model can also include the complexity of the optimization algorithms as an additional factor, since the algorithms' complexity increases with the number of connections between different ports of BD-RIS. While this work provides a first step toward understanding the benefits of BD-RIS in WPT with nonlinear energy harvesting, extending the framework to more general setups is a promising direction for future research. In particular, incorporating multi-antenna transmitters and receivers or supporting multiple energy receivers would enable spatial multiplexing and further enhance the power transfer efficiency.

\appendices

\section{Proof of Proposition~\ref{theorem:vecotorization2}}\label{appen3}
We proceed by rewriting $h_n$ as 
\begin{align}
    h_n &= \mathbf{h}_{R,n}^T\boldsymbol{\Theta}{\mathbf{h}_{I,n}} 
    = \mathrm{Tr}(\mathbf{h}_{R,n}^T\boldsymbol{\Theta}{\mathbf{h}_{I,n}})  \nonumber \\
    &\labelrel={myinlab:1} \mathrm{Tr}({\mathbf{h}_{I,n}}\mathbf{h}_{R,n}^T\boldsymbol{\Theta})  = \mathrm{Tr}(\mathbf{H}_n\boldsymbol{\Theta}) \nonumber \\
    &\labelrel={myinlab:2} {\mathrm{Vec}(\mathbf{H}_n)}^T\mathrm{Vec}(\boldsymbol{\Theta}) 
    \labelrel={myinlab:3}{\mathrm{Vec}(\mathbf{H}_n)}^T\mathbf{P}\boldsymbol{\theta}
    = \mathbf{a}_n^T \boldsymbol{\theta}.
\end{align}
Assume $\mathbf{D}$, $\mathbf{F}$, and $\mathbf{H}$ are arbitrary matrices. Hereby, \eqref{myinlab:1} and \eqref{myinlab:2} come from $\mathrm{Tr}(\mathbf{DFH}) = \mathrm{Tr}(\mathbf{HDF})$ and $\mathrm{Tr}(\mathbf{D}^T\mathbf{F}) = \mathrm{Vec}(\mathbf{D})^T\mathrm{Vec}(\mathbf{F})$, respectively. Moreover, by defining $\boldsymbol{\theta} \in \mathbb{C}^{\bar{M} \times 1}$ as the vector containing the lower/upper-triangle elements in $\mathbf{\Theta}$, one can design a permutation matrix $\mathbf{P} \in \{0, 1\}^{M^2\times \bar{M}}$ such that $\mathbf{P}\boldsymbol{\omega} = \mathrm{Vec}(\mathbf{\Omega})$, which leads to \eqref{myinlab:3}.  \qed{}

\section{Proof of Proposition~\ref{theorem:2}}\label{appen2}
The Neumann approximation can be used to approximate the inverse of a matrix using another close matrix. Imagine $\mathbf{A}$ is a close matrix to an invertible $\mathbf{X}$, such that \cite{mimoneumann}
\begin{equation}
    \lim_{n \rightarrow \infty} (\mathbf{I} - \mathbf{A}\mathbf{X}^{-1})^n = 0.
\end{equation}
Then, the inverse of $\mathbf{A}$ can be written using the Neumann series as \cite{stewart1998matrix}
\begin{equation}\label{eq:neumanbase}
    \mathbf{A}^{-1} \approx \sum_{i = 0}^{\infty} \bigl(\mathbf{X}^{-1}(\mathbf{X} - \mathbf{A})\bigr)^i \mathbf{X}^{-1}.
\end{equation}
Now, let us consider $\mathbf{\Omega}$ as a small increment to the impedance matrix $\mathbf{Z}$, such that $\norm{[\mathbf{\Omega}]_{i,m}} \leq \delta, \forall i,m$, leading to $\mathbf{Z} + \mathbf{\Omega}$ being close to $\mathbf{Z}$. Hereby, we can write the first-order Neumann series as in \eqref{eq:neuman_impedance}. Note that the first-order approximation might lead to losing some information about higher-order terms. However, it has been proven that \eqref{eq:neuman_impedance} is sufficiently accurate if $\delta$ satisfies \eqref{eq:deltacondition} \cite{direnzomutual2, direnzoclerckxmutual, stewart1998matrix},
leading to the error of the approximation approaching zero. \qed{}

\bibliographystyle{ieeetr}
\bibliography{ref_abbv}

\begin{IEEEbiography}[{\includegraphics[width=1in,height=1.25in,clip,keepaspectratio]{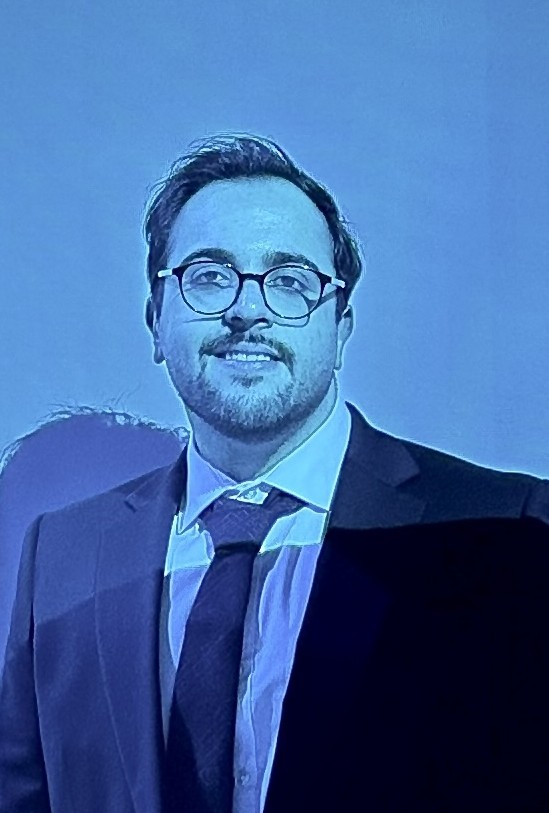}}]{Amirhossein Azarbahram}~(Graduate Student~Member, IEEE) received the B.Sc. degree in telecommunications from KN Toosi University of Technology, Iran, in 2020, and the M.Sc. degree in communications systems from Sharif University of Technology, Iran, in 2022. He is currently a Doctoral researcher at the Centre for Wireless Communications (CWC), University of Oulu, Finland. He was a visiting researcher at the Connectivity Section (CNT) of the Department of Electronic Systems, Aalborg University (AAU), Denmark, in 2024. His research interests include wireless communications and AI-enabled integrated sensing and communications. 
\end{IEEEbiography}

\begin{IEEEbiography}[{\includegraphics[width=1in,height=1.25in,clip,keepaspectratio]{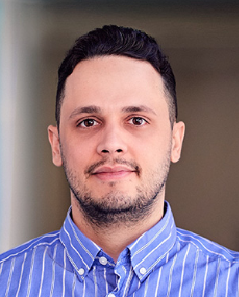}}]{Onel L. A. López}~(Senior~Member, IEEE) is an Associate Professor (tenure track) in sustainable wireless communications engineering at the University of Oulu, Finland.  He received the B.Sc. (1st class honors, 2013), M.Sc. (2017), and D.Sc. (with distinction, 2020) degree in Electrical Engineering from the Central University of Las Villas (Cuba), the Federal University of Paraná (Brazil), and the University of Oulu (Finland), respectively. In 2013-2015, he served as a telematics specialist at the Cuban telecommunications company (ETECSA). In 2020, he was a post-doctoral researcher in a joint project between the University of Oulu and Nokia Oulu, Finland. He was on a six-month research visit to Rice University and the University of Houston, Texas, US, in 2024. He is a collaborator to the 2016 Research Award given by the Cuban Academy of Sciences, a co-recipient of the 2019 and 2023 IEEE European Conference on Networks and Communications (EuCNC) Best Student Paper Award, and the recipient of both the 2020 Best Doctoral Thesis award granted by Academic Engineers and Architects in Finland TEK and Tekniska Föreningen i Finland TFiF in 2021 and the 2022 Young Researcher Award in the field of technology in Finland. He is currently an Associate Editor of the IEEE Transactions on Communications, IEEE Wireless Communications Letters, and IEEE Communications Letters. His research interests include sustainable IoT, energy harvesting, wireless RF energy transfer, wireless connectivity, machine-type communications, and cellular-enabled sensing and positioning systems.
\end{IEEEbiography}

\begin{IEEEbiography}[{\includegraphics[width=1in,height=1.25in,clip,keepaspectratio]{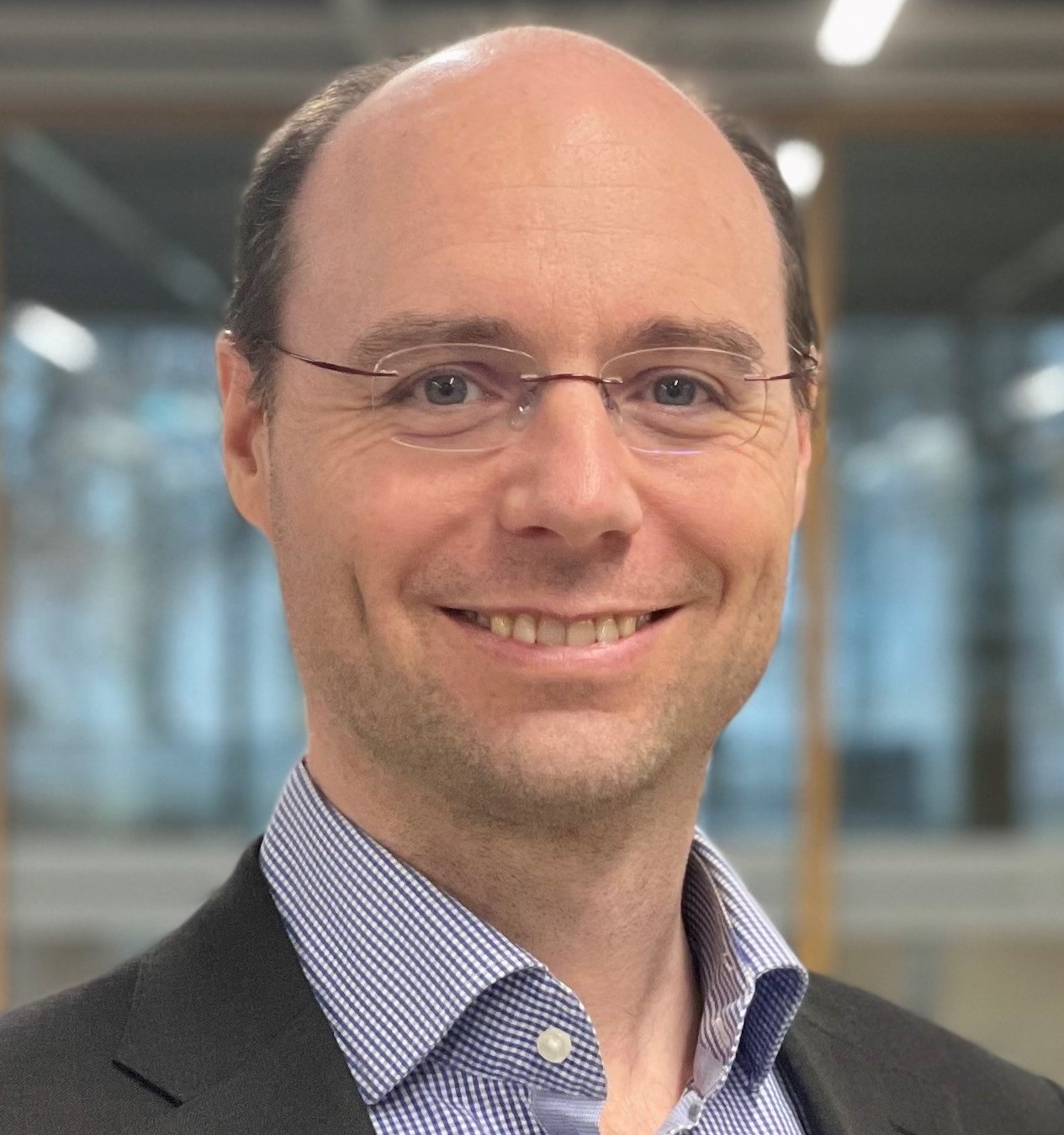}}]{Bruno~Clerckx}~(Fellow, IEEE) is a (Full) Professor, the Head of the Communications and Signal Processing Group, and the Head of the Wireless Communications and Signal Processing Lab, within the Electrical and Electronic Engineering Department, Imperial College London, London, U.K. He received the MSc and Ph.D. degrees in Electrical Engineering from Université Catholique de Louvain, Belgium, and the Doctor of Science (DSc) degree from Imperial College London, U.K. He spent many years in industry with Silicon Austria Labs (SAL), Austria, where he was the Chief Technology Officer (CTO) responsible for all research areas of Austria's top research center for electronic based systems and with Samsung Electronics, South Korea, where he actively contributed to 4G (3GPP LTE/LTE-A and IEEE 802.16m). He has authored two books on “MIMO Wireless Communications” and “MIMO Wireless Networks”, over 350 peer-reviewed international research papers, and 150 standards contributions, and is the inventor of 80 issued or pending patents among which several have been adopted in the specifications of 4G standards and are used by billions of devices worldwide. His research spans the general area of wireless communications and signal processing for wireless networks. He received the prestigious Blondel Medal 2021 from France for exceptional work contributing to the progress of Science and Electrical and Electronic Industries, the 2021 Adolphe Wetrems Prize in mathematical and physical sciences and the 2024 Georges Vanderlinden Prize in Electromagnetism and Telecommunications from Royal Academy of Belgium, multiple awards from Samsung, IEEE best student paper award, and the EURASIP (European Association for Signal Processing) best paper award 2022. He is a Fellow of the IEEE and the IET.
\end{IEEEbiography}

\begin{IEEEbiography}[{\includegraphics[width=1in,height=1.25in,clip,keepaspectratio]{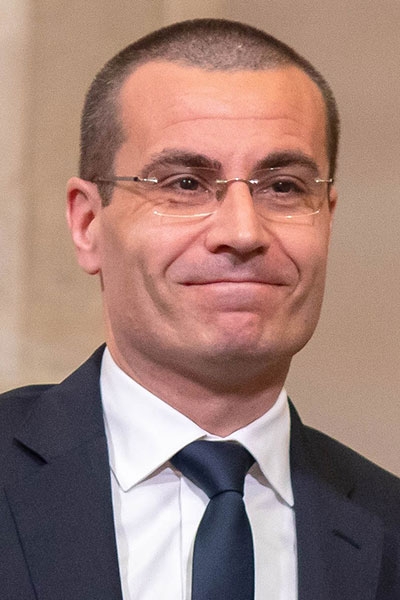}}]{Marco~Di~Renzo}~(Fellow, IEEE)  received the Laurea
(cum laude) and Ph.D. degrees in electrical engineering from the University of
L’Aquila, Italy, in 2003 and 2007, respectively, and the Habilitation à Diriger des
Recherches (Doctor of Science) degree from University Paris-Sud (currently Paris-Saclay University), France, in 2013. Currently, he is a CNRS Research Director
(Professor) and the Head of the Intelligent Physical Communications group with
the Laboratory of Signals and Systems (L2S) at CNRS \& CentraleSupelec, Paris-Saclay
University, Paris, France. Also, he is Chair Professor in Telecommunications
Engineering, the Director of the Centre for Telecommunications Research, and
the Head of the Telecommunications Group, Department of Engineering, King’s
College London, London, United Kingdom. He was a France-Nokia Chair of Excellence
in ICT at the University of Oulu (Finland), a Tan Chin Tuan Exchange Fellow
in Engineering at Nanyang Technological University (Singapore), a Fulbright Fellow
at The City University of New York (USA), a Nokia Foundation Visiting Professor at
Aalto University (Finland), and a Royal Academy of Engineering Distinguished Visiting
Fellow at Queen’s University Belfast (U.K.). He is a Fellow of the IEEE, IET, EURASIP,
and AAIA; an Academician of AIIA; an Ordinary Member of the European
Academy of Sciences and Arts, an Ordinary Member of the Academia Europaea,
and an Ordinary Member of the Italian Academy of Technology and Engineering; an
Ambassador of the European Association on Antennas and Propagation; and a
Highly Cited Researcher. His recent research awards include the Michel Monpetit
Prize conferred by the French Academy of Sciences, the IEEE Communications
Society Heinrich Hertz Award, and the IEEE Communications Society Marconi
Prize Paper Award in Wireless Communications. He served as the Editor-in-Chief of
IEEE Communications Letters from 2019 to 2023. Currently, he is a Voting Member
of the Fellow Evaluation Standing Committee, the Chair of the Publications Misconduct
Ad Hoc Committee, and the Director of Journals of the IEEE Communications
Society. Also, he is on the Editorial Board of the Proceedings of the IEEE.
\end{IEEEbiography}

\begin{IEEEbiography}[{\includegraphics[width=1in,height=1.25in,clip,keepaspectratio]{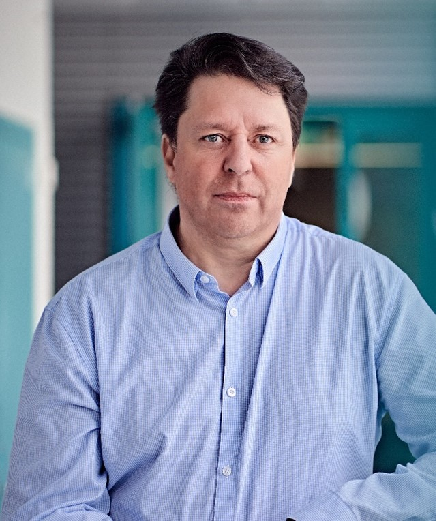}}]{Matti~Latva-aho}~(Fellow, IEEE) is a distinguished expert in wireless communications. He holds M.Sc., Lic.Tech., and Dr.Tech. (Hons.) degrees in Electrical Engineering from the University of Oulu, Finland, awarded in 1992, 1996, and 1998, respectively. From 1992 to 1993, he worked as a Research Engineer at Nokia Mobile Phones in Oulu before joining the Centre for Wireless Communications (CWC) at the University of Oulu. Prof. Latva-aho served as Director of CWC from 1998 to 2006 and later as Head of the Department of Communication Engineering until August 2014. He was nominated as an Academy Professor by the Academy of Finland in 2017. He is a Professor of Wireless Communications at the University of Oulu and served as Director of the National 6G Flagship Programme. He is also a Global Fellow at The University of Tokyo. In 2025, he was appointed Vice-Rector for Research at the University of Oulu for a five-year term. With an extensive portfolio of over 600 conference and journal publications, Prof. Latva-aho has significantly advanced the field of wireless communications. His contributions were recognized in 2015 when he received the prestigious Nokia Foundation Award for his groundbreaking research in mobile communications.
\end{IEEEbiography}

\end{document}